\documentclass[twocolumn]{IEEEtran}

\usepackage{amsmath,epsfig,amssymb,verbatim,amsopn,cite,multirow}
\usepackage{amsthm}
\usepackage{balance}
\usepackage{multirow}

\usepackage[usenames,dvipsnames]{color}
\usepackage[all]{xy}  
\usepackage{url}
\usepackage{amsfonts}
\usepackage{amssymb}
\usepackage{epsfig}
\usepackage{epstopdf}
\usepackage{bm}
\usepackage{balance}
\usepackage{graphicx}
\usepackage{subcaption}
\usepackage{footnote}
\usepackage{algorithm,algorithmic}
\usepackage[margin=16mm,top=15mm,bottom=15mm]{geometry}

\usepackage[nodisplayskipstretch]{setspace}
\newtheorem{Lemma}{Lemma}
\newtheorem{Theorem}{Theorem}

\newtheorem{Definition}[Lemma]{Definition}

\newtheorem{Remark}{Remark}

  {\proof}{\proofend}
\newtheorem{proposition}{Proposition}

\newcommand{\qa}{{\bf a}}
\newcommand{\qb}{{\bf b}}

\newcommand{\qe}{{\bf e}}

\newcommand{\qg}{{\bf g}}

\newcommand{\qr}{{\bf r}}

\newcommand{\qv}{{\bf v}}
\newcommand{\qw}{{\bf w}}
\newcommand{\qx}{{\bf x}}
\newcommand{\qy}{{\bf y}}

\newcommand{\qF}{{\bf F}}
\newcommand{\qG}{{\bf G}}

\newcommand{\qI}{{\bf I}}

\newcommand{\qZ}{{\bf Z}}

\newcommand{\Sm}{\mathcal{S}_m^{\dl}}
\newcommand{\Wm}{\mathcal{W}_m^{\dl}}
\newcommand{\Smu}{\mathcal{S}_m^{\ul}}
\newcommand{\Wmu}{\mathcal{W}_m^{\ul}}
\newcommand{\Zk}{\mathcal{Z}_k^{\dl}}
\newcommand{\Tk}{\mathcal{T}_k^{\dl}}

\newcommand{\Zqdl}{\mathcal{Z}_q^{\dl}}
\newcommand{\Tqdl}{\mathcal{T}_q^{\dl}}

\newcommand{\Zlup}{\mathcal{Z}_{\ell}^{\ul}}
\newcommand{\Tlup}{\mathcal{T}_{\ell}^{\ul}}

\newcommand{\Sn}{\sigma_n^2}
\newcommand{\MR}{\mathsf{MR}}
\newcommand{\ZF}{\mathsf{ZF}}
\newcommand{\FZF}{\mathsf{FZF}}
\newcommand{\PZF}{\mathsf{PZF}}

\newcommand{\vFD}{\mathsf{NAFD}}

\newcommand{\Ntx}{N}
\newcommand{\Nrx}{N}

\newcommand{\dl}{\mathtt{dl}}
\newcommand{\ul}{\mathtt{ul}}

\newcommand{\hgmkpd}{\hat{\qg}_{mk'}^{\dl}}
\newcommand{\tgmkd}{\tilde{\qg}_{mk}^{\dl}}
\newcommand{\tgmlu}{\tilde{\qg}_{m\ell}^{\ul}}

\newcommand{\wmkdl}{\qv_{mk}^{\dl}}
\newcommand{\wmkdlzf}{\qv_{mk}^{\dl,\ZF}}
\newcommand{\wmkdlmr}{\qv_{mk}^{\dl,\MR}}

\newcommand{\wmlulzf}{\qv_{m\ell}^{\ul,\ZF}}
\newcommand{\wmlulmr}{\qv_{m\ell}^{\ul,\MR}}

\newcommand{\wikdl}{\qv_{ik}^{\dl}}
\newcommand{\wmkpdl}{\qv_{mk'}^{\dl}}
\newcommand{\wmkpdlzf}{\qv_{mk'}^{\dl,\ZF}}
\newcommand{\wmkpdlmr}{\qv_{mk'}^{\dl,\MR}}
\newcommand{\wmlul}{\qv_{m\ell}^{\ul}}

\newcommand{\wiqdlzf}{\qv_{iq}^{\dl,\ZF}}
\newcommand{\wiqdlmr}{\qv_{iq}^{\dl,\MR}}

\newcommand{\dm}{\delta_{m}}

\newcommand{\gmqu}{{\qg}_{mq}^{\ul}}

\newcommand{\gmkd}{\qg_{mk}^{\dl}}

\newcommand{\hgmkd}{\hat{\qg}_{mk}^{\dl}}

\newcommand{\hgmonu}{\hat{\qg}_{m1}^{\ul}}
\newcommand{\hgmKuu}{\hat{\qg}_{mK_u}^{\ul}}

\newcommand{\hgmlu}{\hat{\qg}_{m\ell}^{\ul}}

\newcommand{\gmlu}{\qg_{m\ell}^{\ul}}

\newcommand{\gamdmk}{\gamma_{mk}^{\dl}}
\newcommand{\gamdmkp}{\gamma_{mk'}^{\dl}}
\newcommand{\gamuml}{\gamma_{m\ell}^{\ul}}
\newcommand{\gamumlp}{\gamma_{m\ell'}^{\ul}}

\newcommand{\hGmdl}{\hat{\qG}_{\mathcal{S}_m}^{\dl}}

\newcommand{\hGmul}{\hat{\qG}_{m}^{\ul}}

\newcommand{\betamkd}{\beta_{mk}^{\dl}}
\newcommand{\betamkpd}{\beta_{mk'}^{\dl}}

\newcommand{\betakldu}{\beta_{k\ell}^{\mathtt{du}}}

\newcommand{\betamlu}{\beta_{m\ell}^{\ul}}

\newcommand{\alphml}{\alpha_{m\ell}}

\newcommand{\SIm}{\sigma^2_{\mathtt{SI},m}}

\newcommand{\SINR}{\mathtt{SINR}}
\newcommand{\Ex}{\mathbb{E}}

\newcommand{\DSlpzf}{\mathbb{DS}_{\ul,\ell}^\PZF}
\newcommand{\UIqlpzf}{{\mathbb{UI}}_{\ul,q \ell} ^\PZF}
\newcommand{\MIqlpzf}{\mathbb{MI}_{\ul, q \ell}^{\PZF}}
\newcommand{\ANlpzf}{{\mathbb{AN}}_{\ul, \ell} ^\PZF}

\DeclareMathOperator{\F}{\mathbf{F}} 

\DeclareMathOperator{\aaa}{\mathbf{a}}

\DeclareMathOperator{\K}{\mathcal{K}}

\DeclareMathOperator{\SSS}{\mathcal{S}}
\DeclareMathOperator{\CN}{\mathcal{CN}}

\DeclareMathOperator{\bb}{\mathbf{b}}

\DeclareMathOperator{\MM}{\mathcal{M}}

\DeclareMathOperator{\DElta}{\boldsymbol{\delta}}

\DeclareMathOperator{\x}{\mathbf{x}}

\DeclareMathOperator{\VARSIGMA}{\boldsymbol{\varsigma}}

\DeclareMathOperator{\THeta}{\boldsymbol{\theta}}

\DeclareMathOperator{\ALPHA}{\boldsymbol{\alpha}}

\DeclareMathOperator{\UEdk}{\mathrm{UE}^{\mathtt{dl}}_{\mathit{k}}}
\DeclareMathOperator{\UEdkp}{\mathrm{UE}_{k'}^{\mathtt{dl}}}
\DeclareMathOperator{\UEul}{\mathrm{UE}_{\ell}^{\mathtt{ul}}}

\newtheorem{remark}{Remark}

\title{\huge Network-Assisted Full-Duplex Cell-Free Massive MIMO Systems Under Infeasible Circumstances}

 \author{Trinh Van Chien,~\IEEEmembership{Member,~IEEE}, Bui~Trong~Duc, Mohammadali~Mohammadi,~\IEEEmembership{Senior Member,~IEEE}, Hien~Quoc~Ngo,~\IEEEmembership{Fellow,~IEEE}, and Michail Matthaiou,~\IEEEmembership{Fellow, IEEE }
 \vspace{-0.5cm}
\thanks{
This research is funded by the Vietnam
National Foundation for Science and Technology Development (NAFOSTED)
under grant number 102.02-2025.73 for Trinh Van Chien. This work was supported by the U.K. Engineering and Physical Sciences Research Council (EPSRC) grant (EP/X04047X/2) for TITAN Telecoms Hub. The work of H.~Q.~Ngo was supported by the U.K. Research and Innovation Future Leaders Fellowships under Grant MR/X010635/1, and a research grant from the Department for the Economy Northern Ireland under the US-Ireland R\&D Partnership Programme. The work of M. Matthaiou was supported by the European Research Council (ERC) under the European Union’s Horizon 2020 Research and Innovation Programme (grant agreement No. 101001331)
\textit{(Corresponding authors: Bui Trong Duc; Mohammadali Mohammadi}).

Trinh Van Chien and Bui Trong Duc are with the School of Information and Communications Technology (SoICT), Hanoi University of Science and Technology (HUST), Vietnam (email:chientv@soict.hust.edu.vn, buitrongduc0502tlhpvn@gmail.com). 
 
 Mohammadali Mohammadi, Hien Quoc Ngo, and Michail Matthaiou are with the Centre for Wireless Innovation (CWI), Queen's University Belfast, U.K.
 (email:\{m.mohammadi, hien.ngo, m.matthaiou\}@qub.ac.uk). 
 
 Parts of this paper appeared at the 2024  GECCO conference~\cite{chien2025differential}.  
}}

\allowdisplaybreaks
\begin{document}
\bstctlcite{IEEEexample:BSTcontrol}
\maketitle
\begin{abstract} 

Cell-free massive multiple-input multiple-output  is a potential candidate for future networks with pervasive connectivity by utilizing coherent joint transmission and distributed antenna arrays. This paper studies the exploitation of full-duplex communication for a distributed antenna array. Specifically, we derive a closed-form expression for the uplink and downlink ergodic spectral efficiency (SE) for a network where the APs can flexibly operate in either the full-duplex or half-duplex mode with linear processing and Rayleigh fading channels. A long-term total SE maximization problem is formulated subject to a network operation model and individual SE requirements with limited power budget. Due to the intrinsic nonconvexity and infeasible circumstances where some UEs might not be able to achieve the rate requirements, we adapt differential evolution to design a low computational complexity algorithm that can attain good power allocation and network operation mode in polynomial time. Numerical results demonstrate the effectiveness of our system design and proposed algorithm over state-of-the-art benchmarks with satisfactory service to the majority of UEs, although several ones may be unscheduled under harsh conditions.
\end{abstract}

\begin{IEEEkeywords}
	Cell-free massive multiple-input multiple-output (CF-mMIMO), spectral efficiency (SE),  network-assisted full-duplex (NAFD), differential evolution.
\end{IEEEkeywords}

\section{Introduction}
Mobile data traffic continues to grow, driven by the proliferation of Internet of Things (IoT) devices, immersive applications, such as augmented and virtual reality (AR/VR), and the demand for ubiquitous connectivity~\cite{luo2025wireless}. Although fifth-generation (5G) networks have improved mobile broadband services, they still face several challenges, including costly infrastructure deployment, limited millimeter-wave coverage, and scalability issues in ultra-dense scenarios. To overcome these limitations, sixth-generation (6G) networks are envisioned to deliver higher spectral and energy efficiency, ultra-reliable low-latency communications, and massive device connectivity \cite{paper3_ji2021survey}. Beyond enhanced broadband, 6G is expected to enable a wide range of emerging applications, such as large-scale digital twins \cite{paper1_zheng2025cognitive}, autonomous vehicle networks, smart city infrastructures with satellite–terrestrial integration \cite{paper2_zhang2024cost}, and efficient communication through reconfigurable intelligent surfaces \cite{chien2024active}. These capabilities aim to support seamless connectivity and enable new services,  contributing to the evolution of the digital ecosystem, enhancing user experiences, and facilitating future technological developments. In this context, cell-free massive multiple-input multiple-output (CF-mMIMO) and full-duplex (FD) communications have emerged as promising enablers to underpin these demanding requirements~\cite{kurma2024performance}.

CF-mMIMO has been introduced as a solution to mitigate inter-cell interference and to offer balanced quality-of-experience services for users (UEs)~\cite{elhoushy2021cell}. In CF-mMIMO, a large number of access points (APs) antennas are distributed within the coverage area and coordinated by several data center-level computing units (DCCUs) to simultaneously serve many UEs in the same time-frequency resources. The primary distinction of CF-mMIMO, in contrast to conventional cellular mMIMO, lies in the expectation of a significantly greater number of APs compared to UEs, operating within a regime devoid of cell boundaries \cite{ngo17TWC,emil20TWC}. CF-mMIMO inherits key benefits of collocated massive MIMO (Co-mMIMO), such as channel hardening and favorable propagation \cite{marzetta2010noncooperative}, while offering superior spectral efficiency (SE) over small-cell and Co-mMIMO networks through joint/coherent processing of many APs at the DCCUs.

FD communication in CF-mMIMO represents an advanced  mechanism by enabling simultaneous transmission and reception on the same frequency channel \cite{hieu20JSAC,anokye2023power}. By enabling uplink (UL) and downlink (DL) transmissions to happen simultaneously on the same time-frequency resources, FD systems can reclaim the bandwidth that is typically lost in traditional half-duple (HD) systems. However, interference management in FD CF-mMIMO systems is more complex than in their HD counterparts due to the presence of residual self-interference (RSI) that persists even after self-interference (SI) suppression. Although passive and active SI suppression techniques have effectively reduced the SI power to background noise levels, deploying power-hungry SI suppression circuits across all APs would increase the power consumption to unacceptable levels, especially given the large number of APs in the network~\cite{smida2023full}. Additionally, inter-AP/UE interference, known as cross-link interference (CLI), degrades FD CF-mMIMO network performance. The rising CLI impacts the SE and energy efficiency, often undermining the potential gains over HD systems, even with the joint DL/UL power allocation and large-scale fading decoding (LSFD) designs.

To mitigate the adverse impacts of RSI and CLI on the performance of FD CF-mMIMO networks, an effective solution is to adopt a novel physical layer design for CF-mMIMO, termed network-assisted full duplexing (NAFD), as introduced in~\cite{Wang:TCOM:2020}.  In NAFD CF-mMIMO, APs can function in FD mode (with all APs handling UL and DL simultaneously on the same frequency), hybrid-duplex mode (with a mix of HD and FD APs in the network), or flexible-duplex mode (where APs operate in HD). This allows NAFD to integrate all duplex modes within the network~\cite{Wang:TCOM:2020,Jiamin:TWC:2021}. Each AP's transmission mode (i.e., UL reception and/or DL transmission) can be dynamically adjusted according to network conditions and requirements (e.g., the location and number of UL and DL UEs, as well as their quality-of-service (QoS) requirements), enabling simultaneous and adaptable support for both UL and DL services. Consequently, the need for FD operation at all APs is reduced, as only some APs (if required)  will operate in FD mode. This flexibility decreases the overall network energy consumption due to the reduced need for SI suppression. Meanwhile, the UL and DL operation modes of the HD APs are determined by the network's configuration. Specifically, APs near UEs engaged in UL transmission will perform UL reception, while those near UEs with DL traffic will switch to DL transmission. This approach helps manage the amount of CLI in the network.  

\subsection{Review of Related Literature}
 NAFD CF-mMIMO systems have been studied from multiple perspectives and across various configurations~\cite{Wang:TCOM:2020,Xia:TVT:2020,Jiamin:TWC:2021,Xinjiang:TWC:2021,Zhu:COML:2021,Xia:China:2021,Xia:SYSJ:2022,Chowdhury:TCOM:2024,Mohammad:JSAC:2023}. Wang~\textit{et al.}~\cite{Wang:TCOM:2020} first proposed the idea of NAFD CF-mMIMO, inspired by the COMPflex concept introduced in~\cite{Thomsen:WCL:2016}. They analyzed the SE of NAFD CF-mMIMO with HD APs, considering the impact of imperfect CSI and spatial correlations.  Li~\textit{et al.}~\cite{Jiamin:TWC:2021} proposed a beamforming training approach to cancel inter-AP interference and facilitate coherent decoding at the DL terminals. The authors also developed a multi-objective optimization problem aimed at minimizing the total power used for pilot and data transmission, while adhering to per-terminal QoS and power budget constraints. Xia~\textit{et al.}~\cite{Xia:TVT:2020} developed a two-stage iterative semi-definite relaxation and block coordinate descent algorithm to address the highly non-convex joint transceiver design problem. The idea in ~\cite{Xinjiang:TWC:2021} simultaneously optimized UE selection, fronthaul compression ratio,  and transceiver beamforming to maximize the SE and the number of UEs admitted to the network.
 
 The studies in~\cite{Wang:TCOM:2020,Xia:TVT:2020,Jiamin:TWC:2021,Xinjiang:TWC:2021} assumed a fixed mode assignment at the APs, whereas the authors in~\cite{Zhu:COML:2021,Xia:China:2021,Xia:SYSJ:2022,Chowdhury:TCOM:2024,Mohammad:JSAC:2023} demonstrated that the performance of NAFD CF-mMIMO networks can be significantly enhanced through dynamic AP mode assignment. Zhu \textit{et al.}~\cite{Zhu:COML:2021} presented a parallel successive convex approximation (SCA) method to solve the complex optimization problem for duplex mode selection. To further enhance efficiency, they later devised a refined reinforcement learning algorithm based on Q-learning, aimed at reducing computational demands. Xia \textit{et al.}~\cite{Xia:China:2021} investigated duplex mode selection and transceiver design to maximize the aggregated SE of DL and UL, considering the per-UE QoS constraints and power budget constraints at APs and UEs. In~\cite{Xia:SYSJ:2022}, the problem of AP mode selection and DL beamforming design was studied from a secrecy perspective, where artificial noise was injected into the information signals at the APs to prevent interception by eavesdroppers.  Chowdhury \textit{et al.}~\cite{Chowdhury:TCOM:2024} proposed a heuristic approach for AP mode selection and UL/DL power control to maximize the aggregate SE of UL and DL transmissions. Mohammadi~\textit{et al.}~\cite{Mohammad:JSAC:2023} developed a general SCA-based optimization framework to maximize the SE and energy efficiency of the NAFD CF-mMIMO systems, accounting for per-UE QoS requirements, backhaul power consumption, and power constraints at the UEs and APs.

\vspace{-0.5em}
\subsection{Research Gap and Main Contributions}
A common characteristic of~\cite{Xia:TVT:2020,Xinjiang:TWC:2021,Zhu:COML:2021,Xia:China:2021,Xia:SYSJ:2022} is that, from the perspective of AP mode selection and power control design, these works rely heavily on instantaneous channel state information (CSI). This reliance requires that, during each coherence interval and following the channel estimation phase, all acquired CSI is transmitted to the DCCUs for centralized decision-making. Such an approach introduces significant signaling overhead, especially in CF-mMIMO systems with large numbers of APs and UEs. To address these challenges, \cite{Wang:TCOM:2020,Jiamin:TWC:2021,Chowdhury:TCOM:2024,Mohammad:JSAC:2023} have leveraged statistical CSI for system design. In particular, \cite{Wang:TCOM:2020} utilized large-dimensional random matrix theory to derive deterministic equivalents for their UL and DL sum-rate analysis. In \cite{Jiamin:TWC:2021}, the authors proposed estimating the effective DL CSI, which is defined as the inner product between the beamforming and channel vectors, to enable reliable signal decoding under reduced channel hardening. The statistical CSI-based design in \cite{Chowdhury:TCOM:2024}, however, relies on the unrealistic assumption of inter-AP CSI availability. Nevertheless, the design frameworks proposed in \cite{Wang:TCOM:2020, Jiamin:TWC:2021, Chowdhury:TCOM:2024} do not consider per-user QoS requirements, which may lead to service outages for some users. Additionally, the SCA-based design in~\cite{Mohammad:JSAC:2023} faces scalability challenges in large and dense networks and requires lengthy computation times, rendering it impractical for such environments. Another limitation shared by all the aforementioned works, except for~\cite{Xia:China:2021}, is the adoption of a flexible-duplex prototype within the network. The APs are restricted to HD operation, which constrains their ability to support more dynamic and efficient communication modes.  The proposed design primarily relies on statistical CSI. Given that large-scale channel coefficients are updated far less frequently than their small-scale fading counterparts, typically once every 20 to 40 coherence intervals, the network configuration remains fixed over multiple coherence intervals. In such cases, asymmetry between the downlink and uplink data loads may lead to suboptimal performance for some UEs. In practical scenarios, uplink traffic is generally lighter than downlink traffic, and some users may switch their demand from uplink to downlink before the next network update. By accounting for this potential asymmetry and incorporating a degree of tolerance into the network design, we can continue to serve these users effectively without requiring an immediate reconfiguration of the system. 

To address these bottlenecks, we investigate a dynamic NAFD CF-mMIMO network in which subsets of APs can operate in either HD or FD mode to enhance the SE requirements for both UL and DL UEs. The optimal subsets of APs are determined, to maximize the aggregate UL and DL SE under given QoS requirements at all UEs and power constraints at the APs and UEs. To provide a flexible trade-off between interference cancelation and signal boosting, while adhering to system constraints on the available transmit and receive antennas, we implement local partial zero-forcing (PZF) precoding and decoding designs for DL and UL communications, respectively. Using PZF design, each AP adopts zero-forcing (ZF) and  maximum-ratio (MR) linear processing to support UL and DL transmissions of different groups of UEs. The principle of this scheme is that APs in the DL (or UL) suppress interference primarily for (from) the strongest UEs— those with the highest channel gain and most likely to experience (provide) significant interference. In contrast, interference to (from) the weakest UEs is tolerated~\cite{Interdonato:TWC:2020}. Our main contributions are summarized as follows:
\begin{itemize}
\item We derive the UL and DL SEs of each UE for the predetermined operating modes of APs and linear signal processing PZF. The analytical SE expressions unveil the influence of practical aspects including large-scale fading and imperfect CSI on the performance. 
\item We formulate a total UL and DL SE maximization problem subject to the limited transmit power budgets, the network operating modes, and the given individual QoS requirements for UEs. Even though the considered problem is non-convex and NP-hard, it is a generalized version of most previous works in the literature~\cite{Mohammad:JSAC:2023}, where MR processing with HD APs is considered.   
\item An adapted differential evolution (DE)-based algorithm is proposed to handle the inherent nonconvexity and obtain the optimal solution in polynomial time. In infeasible circumstances where all UEs' requirements cannot be simultaneously met, our proposed algorithm can identify unsatisfied UEs while providing the required services to the remaining ones. 
\item Numerical results show the effectiveness of the NAFD CF-mMIMO systems in enhancing the SE of each UE. Furthermore, the optimized modes of APs significantly improve the network services and handle the congestion.
\end{itemize}

\textit{Notation:} We use bold upper case letters to denote matrices, and lower case letters to denote vectors. The superscripts $(\cdot)^*$, $(\cdot)^T$ and $(\cdot)^\dag$ stand for the conjugate, transpose, and conjugate-transpose, respectively.   The circular symmetric complex Gaussian distribution having variance $\Sn$ is denoted by $\mathcal{CN}(0,\Sn)$. Finally, $\Ex\{\cdot\}$ is the statistical expectation.

\vspace{-0.5em}
\section{System model}~\label{sec:Sysmodel}
An NAFD CF-mMIMO system is considered, where $M$ APs coherently serve $K_u$ UL UEs and $K_d$ DL UEs. We denote $\mathcal{M}\triangleq\{1,\ldots,M\}$, $\K_d\triangleq \{1,\dots,K_d\}$ and  $\K_u\triangleq\{1,\ldots,K_u\}$  the sets of indices of the APs, DL UEs, and UL UEs, respectively. Moreover,  $\UEul$ and $\UEdk$ refer to UL UE $\ell$ and the DL UE $k$, respectively. All APs are connected to the DCCU via high-capacity backhaul links. The DL and UL data transmissions are performed simultaneously exploiting the same time and frequency band via HD and FD APs. Each UE has one single antenna. Meanwhile, each AP has  $N$ transmit radio frequency (RF) chains and $N$ receive RF chains. A coherence interval consists of UL pilot training for channel estimation followed by DL/UL data transmission. The APs operate in either HD (UL or DL) or FD (simultaneous UL and DL), depending on network dynamics and requirements.

\subsection{UL Pilot Training and Channel Estimation}
\label{phase:ULforCE}
The propagation channel vector between the $\UEdk$ ($\UEul$) and the $m$-th AP, $\gmkd\in\mathbb{C}^{\Ntx \times 1}$ ($\gmlu\in\mathbb{C}^{\Nrx \times 1}$) is modelled as $\gmkd=\sqrt{\betamkd}\tgmkd,~(\gmlu=\sqrt{\betamlu}\tgmlu) $, where $\betamkd$ ($\betamlu$) is the large-scale fading coefficient, while $\tgmkd\in\mathbb{C}^{\Ntx \times 1}$ ($\tgmlu\in\mathbb{C}^{\Ntx \times 1}$) is the small-scale fading vector with the elements being independent and identically distributed (i.i.d.) $\mathcal{CN} (0, 1)$. Moreover, the channel coefficient between the  $\UEul$ to the $\UEdk$ is denoted as $h_{kl}\sim\CN(0,\betakldu)$, where $\betakldu$ is the large-scale fading coefficient.  Finally, we denote the channel matrix from AP $m$ to AP $i$, for $i \neq m$, as $\qF_{mi} \in \mathbb{C}^{\Nrx \times \Ntx}$,  whose elements  are i.i.d. $\mathcal{CN}(0,\beta_{mi})$ RVs. For $i=m$, $\F_{mm}, \forall m$, represents the SI channel at the FD APs, with elements that are i.i.d. $\mathcal{CN}(0,\SIm)$.


Within a coherence interval of $\tau_c$  symbols, each UL/DL UE transmits a pairwise orthogonal pilot sequence of 
$\tau_t$ symbols to all APs. This condition necessitates $\tau_t\geq K_d + K_u$. Relying on the received pilot signals at AP $m$ and by using the minimum mean-square error (MMSE) estimation technique, $\gmkd$  and $\gmlu$ are estimated. By following~\cite{ngo17TWC},  the MMSE estimates $\hgmkd$ and $\hgmlu$ of $\gmkd$  and $\gmlu$ are modeled as $\hgmkd \sim \mathcal{CN}(\boldsymbol{0},\gamdmk \mathbf{I})$, and $\hgmlu \sim \mathcal{CN}(\boldsymbol{0},\gamuml \mathbf{I})$, respectively, where $\gamdmk \triangleq \frac{{\tau_t\rho_t}(\betamkd)^2}
{\tau_t\rho_t
\betamkd+1}, 
\gamuml \triangleq 
\frac{{\tau_t\rho_t}(\betamlu)^2}
{\tau_t\rho_t
\betamlu
+1}$.

\vspace{-1em}
\subsection{DL-and-UL Payload Data Transmission}
Prior to this phase, the operating mode of the APs—whether UL reception and/or DL transmission—is determined according to the proposed design algorithm (see Section~\ref{sec:SE}). It is important to note that the selection of AP modes is carried out based on the large-scale fading timescale, which evolves gradually over time. The binary variables representing the mode assignment for each AP $m$ are defined as follows
\vspace{-0.3em}
\begin{align}
\label{a}
a_{m} (b_m) \triangleq
\begin{cases}
  1, & \text{if AP $m$ operates in the DL (UL) mode,}\\
  0, & \mbox{otherwise}.
\end{cases}
 \end{align}
Here, for the HD AP $m$, we have
\vspace{-0.3em}
\begin{align}
\label{sumabHD}
    a_m + b_m = 1, ~\text{or}~ a_mb_m=0,~\forall m
\end{align}
which guarantees AP $m$ only operates in HD mode. Moreover, for the FD AP $m$, we have\footnote{Although an FD AP can, in principle, be represented by two HD APs, one dedicated to UL and the other to DL, our study focuses on a fixed network deployment where both the number and locations of APs are predetermined. Within this framework, the only available design degree of freedom is to dynamically assign the operation mode of each AP (FD, HD-UL, or HD-DL), which plays a critical role in avoiding infeasibility and ensuring robust network performance.} 
\vspace{-0.3em}
\begin{align}
\label{sumabFD}
    a_m + b_m = 2,~\text{or}~ a_mb_m=1,~ \forall m.
\end{align}
\subsubsection{DL payload data transmission}
Using the locally acquired channel estimates at each AP, the APs apply linear processing techniques, such as MR or ZF, to the signals transmitted to the $K_d$ DL UEs. Let $s_k^\dl\sim\mathcal{CN}(0,1)$ denote the intended symbol for $\UEdk$. The AP $m$ uses the precoding vector $\wmkdl \in \mathbb{C}^{\Ntx \times 1}$ and transmits
\vspace{-0.3em}
\begin{align}~\label{eq:x_m}
 \qx_{m}^{\dl}
= a_m\sqrt{\rho_d}\sum\nolimits_{k \in \mathcal{K}_d} \theta_{mk} \wmkdl
s_{k}^{\dl},   
\end{align}
where $\rho_d$ is the maximum normalized transmit power at each AP  and $\theta_{mk} \geq 0$, for all $m$ and $k$, is the power control coefficient at AP $m$ for UE $k$. Note that $a_m$ in~\eqref{eq:x_m} clearly determines whether AP $m$ is operating in DL or not.

The received signal at $\UEdk$ is written as
\vspace{0.1em}
\begin{align}~\label{eq:ykdl}
y_k^{\dl}
&=
\sqrt{\rho_d}\sum\nolimits_{m \in \mathcal{M}} a_m\theta_{mk}
\left(\gmkd\right)^\dag\wmkdl
s_{k}^{\dl}
\nonumber\\
&\hspace{0em}+
\sqrt{\rho_d}
\sum\nolimits_{m \in \mathcal{M}}
\sum\nolimits_{k'\in\mathcal{K}_d \setminus k} 
a_m\theta_{mk'}
\left(\gmkd\right)^\dag \wmkpdl
s_{k'}^{\dl}
\nonumber\\
&\hspace{0em}+
\sum\nolimits_{\ell\in \mathcal{K}_{u}}h_{k\ell}x_{\ell}^{\ul}+w_{k}^{\dl},
\end{align}
where $w_{k}^{\dl}\sim\mathcal{CN}(0,1)$ is the AWGN at $\UEdk$. The third term in~\eqref{eq:ykdl}, denotes the CLI caused by the UL UEs transmitting simultaneously over the same frequency band. Moreover, $x_\ell^\ul$ represents the transmit signal from the $\UEul$. 

\subsubsection{UL payload data transmission}
Let $s_{\ell}^\ul$, with $\Ex\left\{|s_{\ell}^\ul|^2\right\}=1$ denote the transmit symbol by $\UEul$ and let $\rho_u$ represent  the maximum normalized transmit power at each UL UE. By considering the UL power control at each UL UE,  the transmit signal  from  $\UEul$ is given by $x_{\ell}^\ul  = \sqrt{\rho_u {\varsigma}_\ell} s_{\ell}^{\ul}$, where ${\varsigma}_{\ell}$ is the transmit power control coefficient at $\UEul$ with
\vspace{-0em}
\begin{align}
\label{UL:power:cons}
    0\leq {\varsigma}_{\ell} \leq 1, \forall \ell.
\end{align}
The APs operating in UL mode (i.e., all APs with  $b_m=1, \forall m$), receive the signal transmitted from all UL UEs. Thus, the received signal $\qy_{m}^{\ul}\in\mathbb{C}^{\Nrx \times 1}$ at AP $m$ can be written as
\begin{align}\label{eq:ymul}
\qy_{m}^{\ul}
&=
\sqrt{\rho_u}\sum\nolimits_{\ell\in \mathcal{K}_{u}}b_m\sqrt{ {\varsigma}_{\ell}}\qg_{m\ell}^{\ul} s_{\ell}^{\ul}
\nonumber\\
&
+
\sqrt{\rho_d}\sum\nolimits_{i\in\mathcal{M}\setminus m}\sum\nolimits_{k\in \mathcal{K}_d}
{b_m a_i } \theta_{ik}
\qZ_{mi}
\wikdl s_k^\dl\nonumber\\
&
\hspace{0em}
\!+\!
\sqrt{\rho_d}\sum\nolimits_{k\in \mathcal{K}_d}\!\!\!\!
{b_m a_m } \theta_{mk}
\qZ_{mm}
\wmkdl s_k^\dl
\!+\!{b_m}\qw_{m}^{\ul},
\end{align}
where $\qw_{m}^{\ul}$ is the $\mathcal{CN}(\boldsymbol{0},\qI_N)$ AWGN vector. In~\eqref{eq:ymul}, the second term captures the interference from APs transmitting towards DL UEs and the third term represents the SI term in case of AP $m$ is FD-enabled (i.e., $b_m a_m=1$). \eqref{eq:ymul} indicates that when AP $m$ is not operating in UL mode $\qy_{m}^{\ul}=\boldsymbol{0}$ as $b_m=0$. 

Then, AP $m$ performs linear processing to the received signal in~\eqref{eq:ymul}. Let $\wmlul\in\mathbb{C}^{N\times 1}$ denote the linear precoding vector at AP $m$. The resulting $(\wmlul)^\dag\qy_{m}^{\ul}$  is then sent to the DCCU for signal detection. To enhance the achievable UL SE, the DCCU further multiplies the received signal by the LSFD coefficient $\alphml, \forall m,k$. Then, the aggregated received signal for $\UEul$, at the DCCU can be expressed as~\cite{Bashar:TWC:2019}
\begin{align}\label{eq:rul}	\qr_{\ell}^{\ul}=\sum\nolimits_{m\in\MM}\alphml(\wmlul)^\dag\qy_{m}^{\ul}, \quad \forall \ell\in\K_u. 
\end{align}
Without loss of generality, we
assume that
\begin{align}\label{eq:alphml}
|\alphml|^2\leq 1,  \quad \forall \ell, m.
\end{align}

\vspace{-2em}
\subsection{Precoding/Combining Design}

\subsubsection{Local Partial Zero-Forcing Precoding}
We consider the local  PZF precoding scheme, which effectively mitigates interference in a distributed and scalable way, providing a flexible trade-off between large array gain and interference reduction~\cite{Interdonato:TWC:2020}.\footnote{The PZF design effectively overcomes the scalability limitations of MR (which arise from inter-UE interference) and the implementation challenges of ZF (i.e., the number of antennas at each AP must satisfy the condition $N>K_d$) within a distributed framework. To successfully eliminate interference among the APs and UEs, a centralized precoding and decoding scheme can be employed in the network. In this approach, the DCCU performs the design based on the complete CSI between all UEs and all APs. While this global CSI availability can lead to enhanced performance, it also imposes a substantial fronthaul burden, as all channel estimates must be transmitted to the DCCU during each coherence interval. We consider the exploration of such a centralized design an important direction for future research.} To this end, each AP $m$ in DL mode (FD AP or HD AP in DL mode) divides the DL UEs into two groups: $\Sm \subset \{1, \ldots, K_d\}$, which includes the index of strong DL UEs, and $\Wm \subset \{1, \ldots, K_d\}$, which includes the index of weak DL UEs, respectively. Then, AP $m$ employs the ZF precoding for DL UEs in $\Sm$  and MR precoding for DL UEs in $\Wm$. 
The local PZF transmit precoding at AP $m$, is given by  $\wmkdlzf = \gamdmk \hGmdl\big( \big(\hGmdl\big)^\dag \hGmdl\big)^{-1} \qe_k$, where
$\hGmdl$ is an $N \times |\Sm|$ collective channel estimation matrix from all the UEs in $\Sm$ to AP $m$ as $\hGmdl=[\hat{\qg}_{mk}^\dl: k \in \Sm]$. Moreover, $\qe_ k$ is the $k$-th column of $\qI_{K_d}$. Therefore, for any pair of $\UEdk$ and  $\UEdkp$ in $\Sm $, we have  
\vspace{-0.5em}
\begin{align}\label{eq:PZF_prec2}
(\hgmkpd)^\dag \wmkdlzf = 
\begin{cases} \displaystyle \gamdmk & \text {if } k=k',\\ \displaystyle 0 & \text {otherwise}.
\end{cases}
\end{align}
Furthermore,  AP $m$ forms the MR precoding vector locally for $\UEdk$, with $k \in \Wm$, is given in by  $\wmkdlmr = \hgmkd$.

To represent the DL group assignment in our PZF combining scheme, we define a pair of binary variables for each $\UEdk$ and DL AP $m$ as follows
\begin{align*}
\dm^{\Zk} = \begin{cases} \displaystyle 1 & \text {if } m \in \Zk,\\ \displaystyle 0 & \text {otherwise},
\end{cases}
\qquad
\dm^{\Tk} = \begin{cases} \displaystyle 1 & \text {if } m \in \Tk,\\ \displaystyle 0 & \text {otherwise},
\end{cases}
\end{align*}
where  $\Zk$ and $\Tk$ denote the set of indices of APs that assign the $\UEdk$ into $\Sm$ for  ZF precoding and the set of indices of APs that assign the $\UEdk$ into $\Wm$ for MR precoding, respectively, defined as
\begin{align}
\Zk \triangleq \{m: k \in \Sm, m=1, \ldots, M\},\\
\Tk \triangleq \{m: k \in \Wm, m=1, \ldots, M\},
\end{align}
with $\Zk\cap \Tk =\emptyset$ and $\Zk\cup \Tk = \mathcal{M}^{\dl}\subset \mathcal{M}$, where $\mathcal{M}^{\dl}$ includes all FD APs and HD APs operating in DL.

With PZF design, the transmit signal in~\eqref{eq:x_m} is given by
\begin{align}~\label{eq:x_m:pzf}
 \qx_{m}^{\dl}
&= a_m\sqrt{\rho_d}\Big(\sum\nolimits_{k \in \Sm} \theta_{mk} \wmkdlzf s_{k}^{\dl}
\nonumber\\
&\hspace{5em}
 + \sum\nolimits_{k \in \Wm} \theta_{mk}
\wmkdlmr s_{k}^{\dl}\Big).
\end{align}
Note that AP $m$ must satisfy the average normalized power constraint, i.e., $\Ex\left\{\|\qx_{m}^{\dl}\|^2\right\}\leq \rho_d$.  Thus, we derive the following power constraint for each AP~\cite{ngo17TWC}
\begin{align}
\label{DL:power:cons2}
\sum\nolimits_{k\in\K} \gamdmk \varphi_{mk}\theta_{mk}^2 \leq a_m,
\end{align}
where $\varphi_{mk} \triangleq \frac{\dm^{\Zk} }{N-\vert\Sm\vert} +
 \dm^{\Tk}  N$, and we have used the fact that $\Ex\big\{\big\Vert \wmkdlzf\big\Vert^2\big\} =\frac{\gamdmk}{N-\vert\Sm\vert}$.

\subsubsection{Local Partial Zero-Forcing Combining}
Similar to the DL precoding design, we assume for the UL combining vector design that each AP $m$, whether operating in FD mode or as a HD AP in UL mode, categorizes the UL UEs into two groups: 
$\Smu \subset \K_u$, representing the indices of strong UL UEs, and 
$\Wmu\subset \K_u$, representing the indices of weak UL UEs. The local PZF combining at AP $m$ is designed to suppress the interference from strong UL UEs by applying ZF combining. Conversely, the interference generated by weak UL UEs which have weak channel gains, is tolerated through the use of MR combining.  The ZF combining at AP $m$, is given by  $\wmlulzf = \gamuml\hGmul\big( \big(\hGmul\big)^\dag \hGmul\big)^{-1}\qe_{\ell}$, where $\qe_{\ell}$ is the $\ell$-th column of $\qI_{K_u}$ and $\hGmul = [\hgmonu,\ldots, \hgmKuu]$. Moreover,  for MR combining vector, we set $\wmlulmr=\hgmlu$.

To represent the UL group assignment in our PZF combining scheme, we define a pair of binary variables for each $\UEul$ and UL AP $m$ as follows
\begin{align*}
\dm^{\Zlup} = \begin{cases} \displaystyle 1 & \text {if } m \in \Zlup,\\ \displaystyle 0 & \text {otherwise},
\end{cases}
\qquad
\dm^{\Tlup} = \begin{cases} \displaystyle 1 & \text {if } m \in \Tlup,\\ \displaystyle 0 & \text {otherwise},
\end{cases}
\end{align*}
where $\Zlup$ ($\Tlup$) denotes the set of indices of APs that assign $\UEul$ into $\Smu$ for ZF combining (the set of indices of APs that assign $\UEul$ into $\Wmu$ for MR combining), with
\begin{align}
\Zlup &\triangleq \{m: \ell \in \Smu, m=1, \ldots, M\},\\
\Tlup &\triangleq \{m: \ell \in \Wmu, m=1, \ldots, M\},
\end{align}
where $\Zlup\cap \Tlup =\emptyset$ and $\Zlup\cup \Tlup = \mathcal{M}^{\ul}\subset \mathcal{M}$, while $\mathcal{M}^{\ul}$ consists of all FD APs and HD APs operating in UL.

\subsection{DL/UL UE Grouping}
The  DL/UL UE grouping can be based on different criteria.  Our approach to UL and DL UE grouping is inspired by \cite[Eq. (44)]{Hien:TGCN:2018} and \cite[Eq. (44)]{Interdonato:TWC:2020}. This method follows a heuristic, distributed, channel-dependent power control policy, and is based on the following rule
\begin{align}~\label{eq:groupin:criterion}
   \sum\nolimits_{k=1}^{\vert\hat{\mathcal{S}}_m \vert} \frac{\bar{\beta}_{m,k}}{\sum_{t=1}^{K_d} \beta_{m,t}} \geq \upsilon\%,
\end{align}
where according to which AP $m$ defines its set $\hat{\mathcal{S}}_m$ by including UEs whose contribution to the total channel gain is at least $\upsilon\%$. In~\eqref{eq:groupin:criterion}, $\{\bar{\beta}_{m,1},\ldots,\bar{\beta}_{m,K_d}\}$ represents the collection of large-scale fading coefficients sorted in decreasing order.

\vspace{-0.1em}
 \section{SE Analysis and Problem Formulation}~\label{sec:SE}
\vspace{-2.5em}
\subsection{Downlink Spectral Efficiency}
In order to detect $s_{k}^{\dl}$, the $\UEdk$ needs to have access to the effective channel. However, since no pilot is transmitted in the DL, this CSI is not available at $\UEdk$. To address this issue, we apply the use-and-then-forget capacity-bounding technique~\cite{ngo17TWC}, where $\UEdk$ uses the stochastic CSI to detect $s_{k}^{\dl}$. Therefore, by applying local PZF precoding, we can rewrite the received signal in~\eqref{eq:ykdl} as
\begin{align}~\label{eq:ykdl:detect:PZF}
y_k^{\dl}
&=\mathbb{DS}_k^\PZF s_{k}^{\dl} + \mathbb{BU}_k^\PZF s_{k}^{\dl}
+\sum\nolimits_{k'\in\mathcal{K}_d \setminus k} \mathbb{DI}_{kk'}^\PZF s_{k'}^{\dl}
\nonumber\\
&\hspace{2em}
+\sum\nolimits_{\ell\in \mathcal{K}_{u}} \mathbb{UI}_{\ell}^\PZF s_{\ell}^{\ul}
+w_{k}^{\dl},
\end{align}
where $\mathbb{UI}_{\ell}^\PZF\triangleq
h_{k\ell}\sqrt{\rho_u \varsigma_\ell}$ and
\begin{align}
\mathbb{DS}_k^\PZF &\triangleq\sqrt{\rho_d}\Ex\Big\{\sum\nolimits_{m \in \Zk} a_m\theta_{mk}
\left(\gmkd\right)^\dag\wmkdlzf
\nonumber\\
&\hspace{2em}
+
\sum\nolimits_{m \in \Tk} a_m\theta_{mk}
\left(\gmkd\right)^\dag\wmkdlmr
\Big\},
\nonumber\\
\mathbb{BU}_k^\PZF &\triangleq\sqrt{\rho_d}
\Big(\sum\nolimits_{m \in \Zk} a_m\theta_{mk}
\left(\gmkd\right)^\dag\wmkdlzf
\nonumber\\
&\hspace{2em}
+
\sum\nolimits_{m \in \Tk} a_m\theta_{mk}
\left(\gmkd\right)^\dag\wmkdlmr
\Big) - \mathbb{DS}_k 
\nonumber\\
\mathbb{DI}_{kk'}^\PZF &\triangleq
\sqrt{\rho_d
}\sum\nolimits_{k'\in\mathcal{K}_d \setminus k}
\Big(
\sum\nolimits_{m \in \Zk}
a_m\theta_{mk'}
\left(\gmkd\right)^\dag \wmkpdlzf
\nonumber\\
&\hspace{2em}
+
\sum\nolimits_{m \in \Tk}
a_m\theta_{mk'}
\left(\gmkd\right)^\dag \wmkpdlmr
\Big),
\end{align}
represent the strength of the desired DL signal ($\mathbb{DS}_k$), the beamforming
gain uncertainty ($\mathbb{BU}_k$), CLI caused by the $\UEdkp$ ($\mathbb{DI}_{kk'}$), and CLI caused by the $\UEul$ ($\mathbb{UI}_{\ell}$), respectively. The combined contribution of the last four terms in~\eqref{eq:ykdl:detect:PZF} is treated as effective noise, which remains uncorrelated with the target signal~\cite{ngo17TWC}.

\begin{proposition}~~\label{Prop:SE:DLPZF}
With PZF precoding, the achievable DL SE at $\UEdk$ is given by 
	\begin{align}~\label{eq:DL:SE}
\mathcal{S}_{\dl,k}^{\PZF} (\qa, \boldsymbol \theta, {\boldsymbol{\varsigma}}) =  \frac{\tau_c-\tau_t}{\tau_c}
\log_2 \left(1  
    + \frac{(\Xi_k^\PZF(\boldsymbol \theta, \qa))^2}{\Omega_k^\PZF(\boldsymbol \theta, {\boldsymbol{\varsigma}}, \qa)}
	\right),
	\end{align}
where $\qa \triangleq \{a_m\}$, $\boldsymbol{\theta}\triangleq \{\theta_{mk}\}$, ${\boldsymbol{\varsigma}} \triangleq \{{\varsigma}_{\ell}\}, \forall m,k,\ell$ and
\begin{align}
    \nonumber
    &\Xi_k^\PZF (\boldsymbol \theta, \qa) \triangleq 
    \sqrt{\rho_d}\sum\nolimits_{m \in \MM}\Big( \dm^{\Zk} a_m\theta_{mk} \gamdmk
    \nonumber\\
    &\hspace{6em}
\!+\!
\dm^{\Tk}N  a_m\theta_{mk} \gamdmk\Big)
    \\
    \nonumber
    &\Omega_k^\PZF (\boldsymbol \theta, {\boldsymbol{\varsigma}}, \qa) \triangleq
    {\rho_d}
    \sum\nolimits_{m \in \MM} 
    \sum\nolimits_{k'\in\mathcal{K}_d}    
    \Big(\dm^{\Zk} \!\!a_m(\betamkd-\gamdmk)
    \\
    \nonumber
    &\times\frac{\theta_{mk'}^2 \gamdmkp}{N \!-\! \vert \Sm\vert} 
\!+\!
 \dm^{\Tk} N  a_m\theta_{mk'}^2
\gamdmkp\betamkd \Big) 
\!+\!\rho_u\!\!\sum\nolimits_{\ell\in\K_u}
\!\!\!\varsigma_\ell\betakldu\!+\!1.
\end{align}
\end{proposition}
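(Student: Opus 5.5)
We follow the use-and-then-forget (UatF) bound of~\cite{ngo17TWC}. Since the five terms in~\eqref{eq:ykdl:detect:PZF} other than $\mathbb{DS}_k^\PZF s_k^\dl$ are mutually uncorrelated and uncorrelated with $s_k^\dl$, treating them as worst-case Gaussian noise gives the achievable SINR
\begin{align*}
\frac{|\mathbb{DS}_k^\PZF|^2}{\Ex\{|\mathbb{BU}_k^\PZF|^2\}+\sum_{k'\neq k}\Ex\{|\mathbb{DI}_{kk'}^\PZF|^2\}+\sum_{\ell}\Ex\{|\mathbb{UI}_\ell^\PZF|^2\}+1}.
\end{align*}
The prelog $(\tau_c-\tau_t)/\tau_c$ accounts for pilot overhead. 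We then evaluate each moment in closed form using the MMSE decomposition $\gmkd=\hgmkd+\qe_{mk}$. Here $\qe_{mk}\sim\CN(\boldsymbol 0,(\betamkd-\gamdmk)\qI)$ is independent of all estimates at AP $m$. Channels of distinct APs are independent.

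\emph{Desired signal.} For $m\in\Zk$, \eqref{eq:PZF_prec2} gives $(\hgmkd)^\dag\wmkdlzf=\gamdmk$ deterministically. The error part has zero mean, so $\Ex\{(\gmkd)^\dag\wmkdlzf\}=\gamdmk$. For $m\in\Tk$, $\Ex\{(\gmkd)^\dag\hgmkd\}=\Ex\|\hgmkd\|^2=N\gamdmk$. Summing with the indicators $\dm^{\Zk},\dm^{\Tk}$ yields $\Xi_k^\PZF$.

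\emph{Noise terms.} The cleanest route is to combine $\mathbb{BU}$ and $\mathbb{DI}$. We use the identity
\[
\Ex\{|\mathbb{BU}_k|^2\}+\sum_{k'\neq k}\Ex\{|\mathbb{DI}_{kk'}|^2\}=\rho_d\sum_{k'}\Ex\Big|\sum_m a_m\theta_{mk'}(\gmkd)^\dag\qv_{mk'}^\dl\Big|^2-|\mathbb{DS}_k|^2.
\]
Cross-AP terms factor by independence into products of means. These means are nonzero only for $k'=k$, and they cancel against $|\mathbb{DS}_k|^2$ up to per-AP variances. It therefore suffices to compute per-AP second moments. For $m\in\Zk$ and $k'\in\Sm$, the estimate part gives $\gamdmk^2$ if $k'=k$ and $0$ otherwise, which is killed by the mean subtraction or by ZF. The error part gives $(\betamkd-\gamdmk)\Ex\|\wmkpdlzf\|^2=(\betamkd-\gamdmk)\gamdmkp/(N-|\Sm|)$, using the Wishart inverse-moment fact stated before~\eqref{DL:power:cons2}. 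For $k'\in\Wm$, the MR vector $\hat{\qg}_{mk'}^\dl$ must also be handled. For $m\in\Tk$ we apply the standard MR computation of~\cite{ngo17TWC}. For $k'\neq k$ it gives $\Ex|(\gmkd)^\dag\hat{\qg}_{mk'}^\dl|^2=N\gamdmkp\betamkd$. For $k'=k$ the variance is $N\gamdmk\betamkd$, from $\Ex\|\hgmkd\|^4=N(N+1)\gamdmk^2$ after subtracting $N^2\gamdmk^2$. This matches the $\dm^{\Tk}N\theta_{mk'}^2\gamdmkp\betamkd$ term; the stated formula absorbs the $N$-scaling in this form.

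\emph{Cross-link interference.} Since $h_{k\ell}$ is independent of $s_\ell^\ul$, we have $\Ex|\mathbb{UI}_\ell|^2=\rho_u\varsigma_\ell\betakldu$. Adding the unit noise variance completes $\Omega_k^\PZF$.

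The main obstacle is the bookkeeping when the ZF precoder of AP $m$ (built from $\Sm$) meets a UE $k$ at that AP. The ZF block then nulls the estimate component of \emph{all} strong-user precoders, so only the error variance survives. This is why the $\dm^{\Zk}$ term involves $\betamkd-\gamdmk$ uniformly over $k'$. We also need to check that cross-group terms ($k\in\Sm$, $k'\in\Wm$) are consistent with the displayed expression; in the form given they appear to have been bounded or approximated as in~\cite{Interdonato:TWC:2020}. We would verify this term carefully, and if necessary state it as the corresponding upper bound on the interference, which preserves achievability.
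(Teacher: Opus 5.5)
\textbf{Gap: the cross-group interference terms.} Your overall route matches the paper's: the UatF bound, the MMSE split of $\gmkd$, and per-AP second moments. Your combined identity for $\mathbb{BU}$ and $\mathbb{DI}$ is just the paper's $\rho_d\Xi_k-|\mathbb{DS}_k|^2$ step.

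The cross-group terms you postpone are a genuine gap, and the patch you propose fails. Under the transmit signal \eqref{eq:x_m:pzf}, the precoder applied to $k'$ at AP $m$ is set by $k'$'s own group, not by $k$'s.
\begin{itemize}
\item Case $k\in\Sm$, $k'\in\Wm$: per unit $a_m\theta_{mk'}^2$, the true interference is $\Ex\{|(\gmkd)^\dag\hgmkpd|^2\}=N\gamdmkp\betamkd$. The corresponding term of $\Omega_k^\PZF$ is only $(\betamkd-\gamdmk)\gamdmkp/(N-|\Sm|)$. This is strictly smaller, since $N\ge 1$, $N-|\Sm|\ge 1$ and $\gamdmk>0$. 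So the displayed denominator is \emph{not} an upper bound on the true interference here. Your fallback, to ``state it as the corresponding upper bound on the interference'', therefore runs in the wrong direction.
\item Case $k\in\Wm$, $k'\in\Sm$: the true value is $\betamkd\gamdmkp/(N-|\Sm|)$ and the displayed one is $N\gamdmkp\betamkd$. Only this case is conservative.
\end{itemize}
For the same reason, your argument that ZF nulls the estimate components, so that $\betamkd-\gamdmk$ appears uniformly over $k'$, covers only $k'\in\Sm$. The MR precoders $\hgmkpd$ of weak UEs at the same AP are not nulled against $\hgmkd$.

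\textbf{How the paper gets the stated formula.} It does so only through its definition of $\mathbb{DI}_{kk'}^\PZF$ after \eqref{eq:ykdl:detect:PZF}. There, at APs $m\in\Zk$ the interfering precoder is taken to be $\wmkpdlzf$, and at $m\in\Tk$ it is $\wmkpdlmr$. In other words, the precoder type is indexed by the \emph{receiving} UE's group. Under that convention each per-AP term involves a single precoder type. Your own computations then give $(\betamkd-\gamdmk)\gamdmkp/(N-|\Sm|)$ for $m\in\Zk$ and $N\gamdmkp\betamkd$ for $m\in\Tk$, for every $k'$, which is exactly $\Omega_k^\PZF$.

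To close your proof you must adopt this convention explicitly, as the paper does. If you keep the physical signal \eqref{eq:x_m:pzf}, your method instead yields the denominator
\begin{align*}
&\rho_d\sum_{m\in\MM}a_m\Big[\sum_{k'\in\Sm}\frac{\theta_{mk'}^2\gamdmkp(\betamkd-\dm^{\Zk}\gamdmk)}{N-|\Sm|}\\
&\quad+\sum_{k'\in\Wm}N\theta_{mk'}^2\gamdmkp\betamkd\Big]+\rho_u\sum_{\ell\in\K_u}\varsigma_\ell\betakldu+1.
\end{align*}
This differs from $\Omega_k^\PZF$ in the cross-group terms and cannot be replaced by it without losing achievability.
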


\begin{proof}
See Appendix~\ref{Proof:Prop:SE:DLPZF}.
\end{proof}

\begin{Remark}
The DL SE result in Proposition~\ref{Prop:SE:DLPZF} is comprehensive and encompass both full-ZF (FZF) and MR precoding as special cases: 
\vspace{-0.2em}
\begin{itemize}
    \item FZF precoding:  $\dm^{\Zk}=1$, $\dm^{\Tk}=0$, and $\vert \Sm \vert =K_d$.
    \item MR precoding: $\dm^{\Zk}=0$ and $\dm^{\Tk}=1$,
\end{itemize}
\label{remark_1}
\end{Remark}

\vspace{-1em}
\subsection{Uplink Spectral Efficiency}
In order to detect the desired signal $s_{\ell}^{\ul}$ from $	\qr_{\ell}^{\ul}$ at the DCCU, it relies only on statistical knowledge of the channels. To this end, we employ the use-and-then-forget capacity-bounding technique~\cite{ngo17TWC} and obtain the achievable UL SE (in bit/s/Hz) of the $\UEul$. We notice that according to~\eqref{eq:rul}, the received signal at the DCCU is a function of $\wmlul$ and $\wmkdl$. Therefore, different choices of DL and UL beamforming designs at the APs, result in different expressions for the UL SINR. For general case of local PZF combining and PZF precoding (PZF/PZF), the received UL signal~\eqref{eq:rul}, can be expressed as
\begin{align}\label{eq:rul_pzf}
r_{\ell}^{\PZF}&=
\DSlpzf  s_{\ell}^\ul
+\sum\nolimits_{q \in \K_u\backslash \ell }\UIqlpzf s_q^\ul
\nonumber\\
&
+ \sum\nolimits_{q \in \K_d} \MIqlpzf s_q^\dl
+\ANlpzf,
\end{align}
where
\vspace{-0.2em}
\begin{subequations}
\begin{align}
\DSlpzf=&  \sqrt{\rho_u}
\Big(\sum\nolimits_{m \in \MM} 
\dm^{\Zlup}\alphml b_m \sqrt{ \varsigma_{\ell}}  {(\wmlulzf)^\dag}\gmlu
\nonumber\\
&\hspace{0em}
+
\dm^{\Tlup}
\alphml b_m\sqrt{ \varsigma_{\ell}}  {(\wmlulmr)^\dag}\gmlu \Big), 
\\
\UIqlpzf=&
\sqrt{\rho_u} 
\Big(\sum\nolimits_{m \in \MM}
\dm^{\Zlup}\alphml b_m \sqrt{ \varsigma_{q}} {(\wmlulzf)^\dag}\gmqu
\nonumber\\
&\hspace{0em}
+
\dm^{\Zlup}\alphml b_m \sqrt{ \varsigma_{q}} {(\wmlulmr)^\dag}\gmqu  \Big),
\\ 
\MIqlpzf  =& \sqrt{\rho_{d}} \sum\nolimits_{i \in \MM} a_i{\theta_{iq}}  
\bigg(
\sum\nolimits_{m \in \MM}
\dm^{\Zlup}
\alphml b_m 
 {\Big(\wmlulzf\Big)^\dag}
 \nonumber\\
&\hspace{0em}
\times{\qF_{mi}}\Big(\delta_i^{\Zqdl} \wiqdlzf \! + \! \delta_i^{\Tqdl} \wiqdlmr\Big)\!+\!
 \dm^{\Tlup}
 \alphml b_m 
 \nonumber\\
 &\hspace{-1.5em}
\times{\Big(\wmlulmr\Big)^\dag}{\qF_{mi}}\Big(\delta_i^{\Zqdl} \wiqdlzf  \!+\!  \delta_i^{\Tqdl} \wiqdlmr\Big)\!\bigg),
\\
\ANlpzf=&  \sum\nolimits_{m \in \MM}
\dm^{\Zlup}
\alphml b_m {(\wmlulzf)^\dag  \qw_{m}^{\ul}}
\nonumber\\
&\hspace{0em}+
\dm^{\Tlup}
\alphml b_m {(\wmlulmr)^\dag  \qw_{m}^{\ul}}.
\end{align}
\end{subequations}
Now, $s_{\ell}^{\ul}$ is detected from $r_{\ell}^{\PZF}$.

\begin{proposition}\label{Prop:SE:ULZFZF}
The achievable UL SE for the $\UEul$ at the DCCU with PZF/PZF design is given by
\begin{align}
    \label{eq:UL:SE}
	\mathcal{S}_{\ul,\ell}^{\PZF} (\qa,\qb, \boldsymbol{\varsigma}, \boldsymbol{\theta}, \boldsymbol{\alpha} )
	\!=\! \frac{\tau_c\!-\!\tau_t}{\tau_c}\log_2
	(1\!+\! \SINR_{\ul,\ell}^{\PZF}(\qa, \qb, \boldsymbol{\varsigma}, \boldsymbol{\theta}, \boldsymbol{\alpha})),
\end{align}
where
\vspace{-0.4em}
\begin{align}\label{eq:SINRMA_ZF1}
&\SINR_{\ul,\ell}^{\PZF}(\qa, \qb, \boldsymbol{\varsigma}, \boldsymbol{\theta}, \boldsymbol{\alpha})=
\nonumber\\
&
\frac{
	\rho_{u} \Big(
 \sum\nolimits_{\substack{m\in\MM}} 
 \dm^{\Zlup}\alphml b_m\sqrt{  \varsigma_{\ell}} \gamuml +\Ntx
 \dm^{\Tlup}\alphml b_m\sqrt{ \varsigma_{\ell}} {\gamuml}\Big)^2
	}
	{\Psi_{\ell}^\PZF(\qb, \boldsymbol{\varsigma},  \boldsymbol{\alpha})+
		\rho_d 
		\sum\nolimits_{\substack{i\in\MM}}
		\sum\nolimits_{q\in\K_d}
		 a_{i}\theta_{iq}^2 \gamma_{iq}^{\dl}\Phi_{\ell i}^\PZF(\qb, \boldsymbol{\alpha})
},
\end{align}
with 
\vspace{-0.4em}
\begin{subequations}~\label{eq:vhomli:mumlpzf}
\begin{align}
\Psi_{\ell}^\PZF&\triangleq
  \sum\limits_{m\in\MM}
  \Big(\rho_{u}
		 \sum\limits_{\ell'\in\K_u}\!
   \Big(
   \dm^{\Zlup}
				\alphml^2 b_m \varsigma_{\ell'}
	\frac{\gamuml(\beta_{m\ell'}^{\ul}-
		\gamumlp)}{\Ntx-|\Smu|}
  \nonumber\\
  &
 +
	\dm^{\Tlup}
     \Ntx
     \alphml^2 b_m \varsigma_{\ell'}
	{\gamuml\beta_{m\ell'}^{\ul}}\Big)
 +
		\dm^{\Zlup}
  \alphml^2b_m\frac{\gamuml}{\Ntx-|\Smu|}
 \nonumber
  \\
  &\hspace{3em}	+
  \dm^{\Tlup}\Ntx
		\alphml^2b_m\gamuml\Big),
  \\
 \Phi_{\ell i}^\PZF&\!\triangleq
 \sum\nolimits_{m\in\MM}\!\!
  b_m
\alphml^2 
\frac{\dm^{\Zlup}\gamuml}{N \!- \!\vert \Smu\vert}
     \Big(  \frac{\delta_i^{\Zqdl}\beta_{mi}^{\dl}} {N\!- \!\vert \Sm\vert }   \!+\!\delta_i^{\Tqdl}N\beta_{mi}^{\dl} \Big)
     \nonumber\\
     &\hspace{0em} 
     \!+\!\dm^{\Tlup} N
 \alphml^2 b_m
 {\gamuml}
     \Big( \frac{\delta_i^{\Zqdl}\beta_{mi}^{\dl} } {N \!- \!\vert \Sm\vert }
   +  \delta_i^{\Tqdl}N \beta_{mi}^{\dl} \Big),~\label{eq:vho:mli:pzf}
\end{align}    
\end{subequations}
where $\beta_{mm}\triangleq \SIm$.
\end{proposition}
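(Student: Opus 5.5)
The argument uses the use-and-then-forget (UatF) bound of~\cite{ngo17TWC} applied to the decomposition \eqref{eq:rul_pzf}. Write $r_\ell^{\PZF}=\Ex\{\DSlpzf\}s_\ell^\ul+(\DSlpzf-\Ex\{\DSlpzf\})s_\ell^\ul+\text{rest}$. All symbols $s_\ell^\ul$, $s_q^\dl$ and the noise are zero-mean, mutually independent, and independent of the channels. Hence every term beyond the first is uncorrelated with the first, and the bound gives
\[
\SINR_{\ul,\ell}=\frac{|\Ex\{\DSlpzf\}|^2}{\Ex\{|\DSlpzf|^2\}-|\Ex\{\DSlpzf\}|^2+\sum_{q\neq\ell}\Ex\{|\UIqlpzf|^2\}+\sum_{q}\Ex\{|\MIqlpzf|^2\}+\Ex\{|\ANlpzf|^2\}}.
\]
The prelog $(\tau_c-\tau_t)/\tau_c$ accounts for the pilot overhead. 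What remains is to evaluate these moments in closed form.

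\textbf{Numerator and uplink-interference terms.} Decompose $\gmqu=\hat{\qg}_{mq}^{\ul}+\boldsymbol{\epsilon}_{mq}$, where the MMSE error $\boldsymbol{\epsilon}_{mq}$ is independent of all estimates and has variance $\beta_{mq}^{\ul}-\gamma_{mq}^{\ul}$ per entry.

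For the ZF branch, the UL analogue of \eqref{eq:PZF_prec2} gives $(\wmlulzf)^\dag\hgmlu=\gamuml$ deterministically. For the MR branch, $\Ex\{\|\hgmlu\|^2\}=N\gamuml$. This yields the numerator.

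For the uplink-interference and beamforming-uncertainty terms we use the following facts:
\begin{itemize}
\item For $q\in\Smu$, $q\neq\ell$, ZF nulls the estimate part, so only the error term survives.
\item For the ZF branch, $\Ex\{\|\wmlulzf\|^2\}=\gamuml/(N-|\Smu|)$, by the Wishart inverse-moment identity already used for \eqref{DL:power:cons2}.
\item For the MR branch, $\Ex\{|\hgmlu^\dag\qg_{mq}^{\ul}|^2\}=N\gamuml\beta_{mq}^{\ul}$ for $q\neq\ell$. For $q=\ell$ the fourth-moment term $N^2\gamuml^2$ cancels against the squared mean.
\item Cross-AP terms vanish because channels at different APs are independent and zero-mean, apart from the mean parts already in the numerator.
\end{itemize}
Putting these together, the self and cross terms combine into the $\ell'$-sum in $\Psi_\ell^\PZF$. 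The noise term gives $\Ex\{\|\qv_{m\ell}^{\ul}\|^2\}$, i.e.\ $\gamuml/(N-|\Smu|)$ or $N\gamuml$, which is the last part of $\Psi_\ell^\PZF$.

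\textbf{Inter-AP and self-interference term.} The term $\MIqlpzf$ is the main obstacle. The key step is to condition on all channel estimates. $\qF_{mi}$ is independent of the UE channels, with i.i.d.\ $\CN(0,\beta_{mi})$ entries ($\beta_{mm}=\SIm$). Hence for fixed vectors $\qu,\qv$ we have $\Ex\{|\qu^\dag\qF_{mi}\qv|^2\}=\beta_{mi}\|\qu\|^2\|\qv\|^2$.

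Contributions from different $m$ are uncorrelated, since $\qF_{mi}$ and $\qF_{m'i}$ are independent and zero-mean. Cross terms between the ZF and MR precoders at AP $i$ vanish because exactly one of the indicators $\delta_i$ is active.

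The receive-combiner norm and the transmit-precoder norm are independent when $m\neq i$. For $m=i$ (self-interference) they are built from the same AP's estimates but for different UE sets, so they are still independent. The expectation therefore factors into
\[
\Ex\{\|\qv_{m\ell}^{\ul}\|^2\}\,\Ex\{\|\qv_{iq}^{\dl}\|^2\}\,\beta_{mi},
\]
giving the factors $\gamuml/(N-|\Smu|)$ or $N\gamuml$, times $\gamma_{iq}^{\dl}/(N-|\Sm|)$ or $N\gamma_{iq}^{\dl}$. Summing over $m$ and pulling out $\theta_{iq}^2\gamma_{iq}^{\dl}$ gives $\Phi_{\ell i}^\PZF$ in \eqref{eq:vho:mli:pzf}.

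The delicate points are justifying this independence in the FD case and keeping the indicator bookkeeping consistent. Substituting all moments into the UatF bound then yields \eqref{eq:SINRMA_ZF1}.
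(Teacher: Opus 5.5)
Your route, the UatF bound applied to \eqref{eq:rul_pzf} followed by closed-form moments, is the procedure the paper has in mind; it omits the UL proof as a repeat of Appendix~\ref{Proof:Prop:SE:DLPZF}. Your numerator, beamforming-uncertainty terms, MR-branch interference, noise term, and the conditional factorization of $\MIqlpzf$ are all sound. The step that fails is the claim that the cross terms ``combine into the $\ell'$-sum in $\Psi_\ell^\PZF$.'' On the ZF branch you only treat interferers $q\in\Smu$. Take an AP $m\in\Zlup$ and an interferer $q\in\Wmu$. The combiner $\wmlulzf$ is built only from $\{\hat{\qg}_{m\ell'}^{\ul}:\ell'\in\Smu\}$, so it is independent of $\gmqu$ and nothing is nulled:
\[
\Ex\big\{|(\wmlulzf)^\dag\gmqu|^2\big\}=\beta_{mq}^{\ul}\,\frac{\gamuml}{N-|\Smu|}.
\]
The displayed $\Psi_\ell^\PZF$ instead assigns the reduced factor $\beta_{m\ell'}^{\ul}-\gamma_{m\ell'}^{\ul}$ to every $\ell'\in\K_u$. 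So your computation, done correctly, does not reproduce the stated expression whenever $\Wmu\neq\emptyset$.

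The stated $\Psi_\ell^\PZF$ comes out only under a convention the paper uses implicitly: that the ZF combiner nulls all UL UEs, as its literal definition $\hGmul=[\hgmonu,\ldots,\hgmKuu]$ says. The same convention appears in the DL appendix, \eqref{eq:EDIkkp:PZF}, where every interferer at an AP in $\Zk$ gets $\betamkd-\gamdmk$. Under that convention, however, the inverse-Wishart identity gives $\Ex\{\|\wmlulzf\|^2\}=\gamuml/(N-K_u)$. The formula is therefore exact only in the FZF case $|\Smu|=K_u$. You must either state this convention explicitly as an assumption (or approximation), or replace the weak-UE ZF-branch terms by $\beta_{m\ell'}^{\ul}\gamuml/(N-|\Smu|)$; asserting agreement is not a proof. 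A smaller point: the DL precoder in $\MIqlpzf$ sits at AP $i$, so your factorization yields $\gamma_{iq}^{\dl}/(N-|\mathcal{S}_i^{\dl}|)$ and a $\Phi$ that depends on $q$. Do not copy the paper's $N-|\Sm|$.
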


\begin{proof}
Since the proof follows the same procedure as in the DL scenario, it is omitted here for brevity.
\end{proof}

\begin{Remark}
   The UL SE in Proposition~\ref{Prop:SE:ULZFZF} includes different combining/precoding designs in NAFD CF-mMIMO systems. More specifically, 
   \begin{itemize}
       \item PZF combining/ MR precoding:  set $\delta_i^{\Tqdl}=1$ and $\delta_i^{\Zqdl}=0$ in $\Phi_{m\ell i}^\PZF$.
       \item MR combining/ MR precoding: set $\delta_i^{\Tqdl}=1$, $\delta_i^{\Zqdl}=0$, $\dm^{\Zlup}=0$, and $\dm^{\Tlup}=1$.
       \item FZF combining/ MR precoding: set $\delta_i^{\Tqdl}=1$, $\delta_i^{\Zqdl}=0$, $\dm^{\Zlup}=1$, $\dm^{\Tlup}=0$, and $\vert \Smu\vert = K_u$. 
        \item FZF combining/ FZF precoding: set $\delta_i^{\Tqdl}=0$, $\delta_i^{\Zqdl}=1$, $\dm^{\Zlup}=1$, $\dm^{\Tlup}=0$, $\vert \Smu\vert = K_u$ and $\vert \Sm\vert = K_d$. 
   \end{itemize}
   \label{remark_2}
\end{Remark}

\vspace{-1em}
\subsection{Sum Spectral Efficiency Maximization}
\label{sec:SE:opt}
We now formulate an optimization problem to maximize the sum SE of the considered network by optimizing the mode assignment vectors $(\aaa,\bb)$, the UL LSFD weights 
$\ALPHA$, and the UL and DL power control coefficients 
$(\THeta, \VARSIGMA)$. The optimization constraints include the per-UE SE requirement, as well as the maximum available transmit power at each AP and UL UE. More specifically, the optimization problem is
\begin{subequations}\label{P:SE}
	\begin{align}
		\underset{\qx}{\max}\,\, &
		\SSS^{\xi,\vFD} (\x)
		\\
		\mathrm{s.t.} \,\,
		\nonumber
		& \eqref{a}-\eqref{sumabFD}, 
		\eqref{DL:power:cons2}, \eqref{UL:power:cons}, 
  \eqref{eq:alphml}, 
		\\
		& \mathcal{S}_{\ul,\ell}^{\xi,\vFD} (\qb, \boldsymbol \varsigma, {\boldsymbol{\theta}},\boldsymbol \alpha) \geq \mathcal{S}_\ul^o,~\forall \ell
		\label{UL:QoS:cons}
		\\
		&\mathcal{S}_{\dl,k}^{\xi,\vFD} (\qa, \boldsymbol \theta, {\boldsymbol{\varsigma}}) \geq  \mathcal{S}_\dl^o,~\forall k, 
		\label{DL:QoS:cons}
  \\
  & \theta_{mk} \geq 0,~\forall k, m,
	\end{align}
\end{subequations}
where $\xi\in\{\PZF,\MR, \FZF\}$, $\qx\triangleq\{\qa, \qb, \boldsymbol \varsigma, \boldsymbol \theta, \boldsymbol \alpha\}$, $\mathcal{S}_\ul^o$ and $\mathcal{S}_\dl^o$ are the minimum SE requirements of $\UEul$ and $\UEdk$, respectively, and the total SE of the NAFD system is
\begin{align}
    \SSS ^{\xi,\vFD}(\x) &\triangleq \sum\nolimits_{\ell\in\mathcal{K}_u} \mathcal{S}_{\ul,\ell}^{\xi,\vFD} (\qb, \boldsymbol \varsigma, \boldsymbol \theta, \boldsymbol \alpha)   \nonumber\\
    &+  \sum\nolimits_{k\in\mathcal{K}_d}\mathcal{S}_{\dl,k} ^{\xi,\vFD}(\qa, \boldsymbol \theta, {\boldsymbol{\varsigma}}).
\end{align} 
We emphasize that problem~\eqref{P:SE} optimizes a set of different variables to simultaneously satisfy the quality of services to the $K_u +K_d$ UEs.  A full service satisfaction to all the UEs might be impossible in challenging circumstances such as harsh propagation conditions with limited power budget in both the UL and DL data transmissions. Note that if only one user does not meet their Se requirement, then the feasible region of the  problem~\eqref{P:SE} is empty and there is no solution found. The innovative idea is to detect which users should be scheduled or which subset of UEs should be served. In our framework, this selection process is not arbitrary but coupled with  problem~\eqref{P:SE}. Specifically, the feasibility of the optimization problem directly depends on whether the chosen subset of UEs can be simultaneously supported under the stringent SE and power constraints. We emphasize that the achievable UL and DL SE of each user is obtained based on the quasi-static channel model. Consequently, the feasibility of  problem~\eqref{P:SE} is over multiple coherence blocks whenever the large-scale fading coefficients remain.
\section{Proposed Differential Evolution Algorithm}
\label{sec:solution}
We propose a constraint-handling differential evolution (CHDE) algorithm, which combines a SE lower boundary constraint-handling mechanism with DE. The overall flow of the proposed algorithm is illustrated in Algorithm \ref{alg:CHDE}. CHDE begins with a randomly initialized population including $I$ individuals, while each individual corresponds to a solution. In each generation, all individuals participated in the evolutionary process through two operators: mutation and crossover. Mutation introduces diversity by perturbing existing solutions, while crossover combines current solutions and mutant vectors to generate candidate offspring. Additionally, CHDE retains only top solutions across generations and handles constraints by stopping service to UEs below the SE threshold and reallocating power to the remaining UEs, thereby improving efficiency. Upon termination, the best individual is returned as the solution.
\vspace{-1em}
\subsection{Solution Representation and Population Initialization}
Each individual represents a potential solution, and the fitness value of an individual corresponds to the objective of the solution to  problem \eqref{P:SE}. At the beginning of the algorithm, a population $\mathcal{P}^{(G)}$ 
consisting of $I$ individuals. Each individual is represented by an $len$-dimensional vector, where $len = 2M + K_u + MK_u + MK_d + 1$ and each element is in $[0,1]$. Specifically, the $i$-th individual of the population at $G$-th generation is expressed as follows
\begin{equation}
    \mathbf{x}^{(iG)} = [x^{(iG)}_1,x^{(iG)}_2, \ldots, x^{(iG)}_{len}]^T,
\end{equation}
Since this representation is continuous, a decoding step is required to map the vector elements into decision variables of problem \eqref{P:SE}. In particular, the first $2M$ elements correspond to the operation modes of APs, denoted by $\big\{x^{(iG)}_{n}\big\}_{n = 1}^{M}$ and $\{x^{(iG)}_{n}\}_{n = M + 1}^{2M}$. Specifically, $\forall m \in [1, \ldots, M]$, we have
\begin{equation}
    a_m = \mathbf{1}\!\left(x^{(iG)}_m \geq 0.5\right), 
\quad 
b_m = \mathbf{1}\!\left(x^{(iG)}_{m+M} \geq 0.5\right),
\end{equation}
where $\mathbf{1}(\cdot)$ denotes the indicator function. Additionally, the segments $\{x^{(iG)}_{n}\}_{n = 2M + 1}^{2M + K_u}$ and $\{x^{(iG)}_{n}\}_{n = 2M + K_u + 1}^{2M + K_u + MK_u}$ are adaptive to be converted into $\boldsymbol \varsigma$ and $\boldsymbol \alpha$, respectively. Finally, $\{x^{(iG)}_{n}\}_{n = 2M + K_u + MK_u + 1}^{2M + K_u + MK_u + MK_d}$ and $x^{(iG)}_{2M + K_u + MK_u + MK_d +1}$ represent how the power allocation satisfies constraint (\ref{DL:power:cons2}), specifically as follows:
\begin{equation}
    \theta_{mk} = \sqrt{\frac{a_m x^{(iG)}_{len}}{\gamdmk \varphi_{mk}}\frac{x^{(iG)}_{2M + K_u + MK_u +(m-1)K_d + k}}{\sum_{n =2M + K_u + MK_u + (m-1)K_d + 1 }^{2M + K_u + MK_u + mK_d}x^{(iG)}_n}}.
\end{equation}
Otherwise, we denote $\mathbf{x}^{(bestG)}$ as the individual with the maximum fitness in the $G$-th generation obtained by evaluating the objective function of problem (\ref{P:SE}) for the current population.

\begin{algorithm}[t]
    \caption{CHDE}
    \label{alg:CHDE}
        \textbf{Input:} NAFD CF-mMIMO; scale factor $\mathsf{F}$, crossover rate $\mathsf{CR}$, and maximum number of generations $G_{\max}$;\\
        \textbf{Output:}  $\qa, \qb, \boldsymbol \varsigma, \boldsymbol \theta,~\text{and}~ \boldsymbol\alpha$; 
    \begin{algorithmic}[1] 
        \STATE Generation index $G \leftarrow 1$;
        \STATE Randomly initialize a population $\mathcal{P}^{(G)}$ of $I$ individuals;
        \STATE Repair and evaluate each individual within $\mathcal{P}^{(G)}$;
        \WHILE{\textit{Termination condition not met}}
            \STATE Offspring  $\mathcal{B}^{(G)} \leftarrow \varnothing$; 
            \FOR{$i\leftarrow 0$ to $I$}
                \STATE Generate mutant vector $\mathbf{u}^{(iG)}$ using \eqref{eq:mutation} and \eqref{eq:normalize_mutant};
                \STATE Generate trial solution $\mathbf{v}^{(iG)}$  
                using \eqref{eq:crossover};
                \STATE Repair and evaluate $\mathbf{v}^{(iG)}$;
                \IF{$f\left(\mathbf{v}^{(iG)}\right) \geq f\left(\mathbf{x}^{(iG)}\right)$}
                    \STATE $\mathcal{B}^{(G)} \leftarrow \mathcal{B}^{(G)} \cup \mathbf{v}^{(iG)} $;
                \ELSE
                    \STATE $\mathcal{B}^{(G)} \leftarrow \mathcal{B}^{(G)} \cup \mathbf{x}^{(iG)} $
                \ENDIF
            \ENDFOR
            \STATE $\mathcal{P}^{(G+1)} \leftarrow \mathcal{B}^{(G)}$; $G \leftarrow G + 1$;
        \ENDWHILE
    \end{algorithmic}
\end{algorithm}
\setlength{\textfloatsep}{0.05cm}
\vspace{-1em}
\subsection{Individual Repairing and Evaluation with Infeasibility}
The quality of each individual within the population is evaluated across the optimal capability of the optimization problem. The fitness of each individual $\mathbf{x}^{(iG)}$ is defined from the objective function 
as follows:
\begin{equation}
    f\big(\mathbf{x}^{(iG)}\big) = \SSS^{\xi,\vFD} \big(\x^{(iG)}\big).
    \label{eq:fitness_cal}
\end{equation}

Corresponding to the maximization problem, an individual $\mathbf{x}^{(iG)}$ is better than another $\mathbf{x}^{(jG)}$ when $f(\mathbf{x}^{(iG)})$ is larger than $f(\mathbf{x}^{(jG)})$. However, to address constraints (\ref{UL:QoS:cons}) and (\ref{DL:QoS:cons}), we employ a technique to suspend service for idle UEs without ensuring power constraints. Specifically, assuming $\tilde{\mathcal{K}}_u$ and $\tilde{\mathcal{K}}_d$ are sets of UL and DL UEs, respectively, who do not meet the lower power boundary, defined as follows:
\begin{align}
    \tilde{\mathcal{K}}_u &= \{\ell: \mathcal{S}_{\ul,\ell}^{\xi,\vFD} (\qb, \boldsymbol \varsigma, {\boldsymbol{\theta}},\boldsymbol \alpha) < \mathcal{S}_\ul^o,~\forall \ell \in \mathcal{K}_u \},\\
    \tilde{\mathcal{K}_d} &= \{ k: \mathcal{S}_{\dl,k}^{\xi,\vFD} (\qa, \boldsymbol \theta, {\boldsymbol{\varsigma}}) < \mathcal{S}_\dl^o,~\forall k \in \mathcal{K}_d\}.
\end{align}

Denote $\mathbf{U}^c$ and $\mathbf{D}^c$ as two diagonal matrices representing the corresponding service  for UL and DL UEs, defined as:
\begin{equation}
    U^c_{\ell} = \mathbf{1}\!\left(\ell \in \mathcal{K}_u \setminus \tilde{\mathcal{K}}_u\right), 
\quad 
D^c_{k} = \mathbf{1}\!\left(k \in \mathcal{K}_d \setminus \tilde{\mathcal{K}}_d\right),
\end{equation}
where $\mathbf{1}(\cdot)$ denotes the indicator function. Then, the solution related to the individual $\mathbf{x}^{(iG)}$ is updated:
\begin{align}
    \tilde{\boldsymbol \varsigma} = \mathbf{U}^c \boldsymbol \varsigma, \quad \tilde{\boldsymbol \alpha} = \mathbf{D}^c \boldsymbol \alpha,
\end{align}
\begin{equation}
    \tilde{\theta}_{mk} = \sqrt{\frac{a_m x^{(iG)}_{len}}{\gamdmk \varphi_{mk}}\frac{x^{(iG)}_{2M + K_u + MK_u +(m-1)K_d + k}D^c_k}{\sum_{n =2M + K_u + MK_u + (m-1)K_d + 1 }^{2M + K_u + MK_u + mK_d}x^{(iG)}_nD^c_k}},
\end{equation}
where, $\tilde{\boldsymbol \varsigma}$, $\tilde{\boldsymbol \alpha}$ and $\tilde{\theta}$ have some zero elements at positions of UEs that are not served. Therefore, the SE of each UE and the fitness of individual $\mathbf{x}^{(iG)}$ are calculated again as follows:
\begin{align}
    \tilde{\mathcal{S}}_{\ul,\ell}^{\xi,\vFD} &= \begin{cases}
        \mathcal{S}_{\ul,\ell}^{\xi,\vFD}(\qb, \tilde{\boldsymbol \varsigma}, \tilde{{\boldsymbol{\theta}}},\tilde{\boldsymbol \alpha}), ~\forall \ell \in \mathcal{K}_u \backslash \tilde{\mathcal{K}}_u,\\
        0, ~\forall \ell \in \tilde{\mathcal{K}}_u,
    \end{cases}
\\
    \tilde{\mathcal{S}}_{\dl,k}^{\xi,\vFD} &= \begin{cases}
        \mathcal{S}_{\dl,k}^{\xi,\vFD}(\qa, \tilde{{\boldsymbol{\theta}}}, \tilde{\boldsymbol \varsigma}), ~\forall k \in \mathcal{K}_d \backslash \tilde{\mathcal{K}}_d,\\
        0, ~\forall k \in \tilde{\mathcal{K}}_d,
    \end{cases}
\end{align}
\begin{equation}
    f(\mathbf{x}^{(iG)}) = \sum\nolimits_{\ell \in \mathcal{K}_u} \tilde{\mathcal{S}}_{\ul,\ell}^{\xi,\vFD} + \sum\nolimits_{k \in \mathcal{K}_d}\tilde{\mathcal{S}}_{\dl,k}^{\xi,\vFD}.
    \label{eq:update_fitness}
\end{equation}

The suspend-service mechanism is applied per large-scale fading timescale. Dropped UEs are reinserted and re-evaluated in the nsext time slot; therefore, our method does not permanently starve a UE unless it remains infeasible over consecutive timescales due to the power budget\footnote{Such persistently infeasible UEs can be handled by higher-layer scheduling mechanisms, e.g., QoS constraint relaxing, while the proposed algorithm focuses on providing a feasible physical-layer configuration.}.
\vspace{-1em}
\subsection{Reproduction and individual selection}
In the $G-$th generation, each individual $\mathbf{x}^{(iG)}$ will create an offspring individual across mutation and crossover operators, aiming to explore and exploit the solution space. In this paper, we first use the well-established \textit{DE/current-to-$p$best/1} operator to create a mutant vector, which has been widely used in advanced DE variants such as JADE, LSHADE, and their various extensions \cite{LSHADE_wang2021shade, LSHADE_tang2022adaptive} due to the robust balance between convergence speed and population diversity. The mutant vector is mathematically described as
\begin{equation}
    \mathbf{u}^{(iG)} = \mathbf{x}^{(iG)} + \mathsf{F}\left(\mathbf{x}^{(pbestG)} - \mathbf{x}^{(iG)}\right) +  \mathsf{F}\left(\mathbf{x}^{(r_1G)} - \mathbf{x}^{(r_2G)}\right),
    \label{eq:mutation}
\end{equation}
where $\mathbf{u}^{(iG)}$ is a mutant vector corresponding to the parent solution $\mathbf{x}^{(iG)}$, $\mathbf{x}^{(pbestG)}$ is selected randomly from the top best solutions, $\mathbf{x}^{(r_1G)}$ and $\mathbf{x}^{(r_2G)}$ are selected randomly from current population, and $\mathsf{F}$ is a scaled factor. Specifically, $\mathbf{x}^{(pbestG)} - \mathbf{x}^{(iG)}$ represents a local search behavior aimed at guiding the individual's movement to exploit the promising solution space provided by the best individual. The second component $\mathbf{x}^{(r_1G)} - \mathbf{x}^{(r_2G)}$ represents the disparity between the two randomly selected individuals, serving as a global search behavior.  Nonetheless, \eqref{eq:mutation} may generate some elements of $\mathbf{u}^{(iG)}$ are outside the search range boundaries $[x_j^{min}, x_j^{max}]$, we applied the correction performed in \cite{lshade}:
\begin{equation}
    u_j^{(iG)} = \begin{cases}
        (x^{min}_j + x_j^{(iG)})/2\quad \text{if} ~u_j^{(iG)} < x^{min}_j,\\
        (x^{max}_j + x_j^{(iG)})/2\quad \text{if} ~u_j^{(iG)} > x^{max}_j.
    \end{cases}
    \label{eq:normalize_mutant}
\end{equation}

After generating the mutant vector, $\mathbf{u}^{(iG)}$ is crossed with the parent $\mathbf{x}^{(iG)}$ to generate trial vector $\mathbf{v}^{(iG)}$. In this paper, Binomial Crossover, which is the most commonly used crossover operator in DE, is employed and implemented as
\begin{equation}
    v_j^{(iG)} = \begin{cases}
        u_j^{(iG)} \quad 
        &\mathcal{U}[0,1) \leq \mathsf{CR}~\text{or}~j=j_{rand},\\
        x_j^{(iG)} \quad &\text{otherwise},
    \end{cases}
    \label{eq:crossover}
\end{equation}
where $\mathcal{U}[0, 1)$ denotes a uniformly random number in range $[0,1)$, and  $j_{rand} \in \{1, 2, \ldots, len\}$ is randomly preselected to ensure that $\mathbf{v}^{(iG)}$ has at least on dimension different from $\mathbf{x}^{(iG)}$ aiming to create the diversity within the population. After generating all trial vectors $\mathbf{v}^{(iG)}$, their fitness is evaluated by (\ref{eq:update_fitness}). Selection process then compares each $\mathbf{x}^{(iG)}$ with its trial vector $\mathbf{v}^{(iG)}$, keeping the better one for the next generation.
\begin{equation}
    \mathbf{x}^{(i(G+1))} = \begin{cases}
        \mathbf{v}^{(iG)} &\text{if}~ f(\mathbf{v}^{(iG)}) \geq f(\mathbf{x}^{(iG)}), \\
        \mathbf{x}^{(iG)} &\text{otherwise}.
    \end{cases}
\end{equation}

\vspace{-1em}
\subsection{Termination condition}
CHDE will be terminated if one of the following two criteria is satisfied: \textit{i)} the number of generations reaches a maximum specified, thereby ensuring bounded computational complexity and preventing excessive runtime and \textit{ii)} the objective function remains unchanged over predetermined generations, which implies the search process convergence to the stable solution. These criteria jointly provide a balance between solution quality and computational efficiency, allowing the algorithm to terminate either when sufficient exploration has been completed or when further iterations are unlikely to yield improvement.

\subsection{Discussion of proposed algorithm}
We  probabilistically analyze the convergence of CHDE to an optimal solution using an $\mathcal{\varepsilon}$-optimal solution space. 
\begin{Definition}
    Let us introduce $\mathcal{S}^{*}_{\varepsilon}$ to be the space of the $\varepsilon-$optimal solution to (\ref{P:SE}), which is
    \begin{equation}
        \mathcal{S}^{*}_{\varepsilon} = \left\{\mathbf{x} \big||f(\mathbf{x}) - f(\mathbf{x}^{*})| \leq \varepsilon, \mathbf{x}~and~\mathbf{x}^{*} \in \mathcal{S} \right\},
        \label{eq:epsilong_space}
    \end{equation}
    where $f(\cdot)$ and $\mathcal{S}$ denote the objective function and the feasible space of  problem (\ref{P:SE}), respectively. However, $\mathbf{x}^{*}$ is the optimal solution and $\varepsilon$ is a small positive value.
\end{Definition}

The set $\mathcal{S}^{*}_{\varepsilon}$  consists of all solutions whose objective values deviate from the global optimum by at most $\varepsilon$. From an initial point, \textbf{Algorithm~\ref{alg:CHDE}} gradually improves population quality along generations, therefore the population tends to move closer to regions of higher fitness, which increases the likelihood that at least one individual lies within an $\mathcal{\varepsilon}-$optimal solutions space.
\begin{Lemma}\cite{chien2024active}
\label{lemma:convergence_probability}
    For a population $\mathcal{P}$ with $I$ individuals, the probability of the population converging to an individual belonging to $\mathcal{S}_{\varepsilon}^{*}$ by exploiting CHDE is defined as follows
\begin{equation}
    \label{eq:convergence}
    \mathsf{Pr}\left(\mathcal{P} \cap \mathcal{S}^{\ast}_{\varepsilon} \neq \varnothing\right) \geq  1- \left(1 - \mu (\mathcal{S}^{\ast}) P_{\mathrm{ep}}\right)^{I},
\end{equation}
where $P_{ep}$ is the mutation probability of each individual, $\mu(\mathcal{S}_\varepsilon^\ast)$ is the measure to $\mathcal{S}_\varepsilon^\ast$ regarding the reproduction.
\end{Lemma}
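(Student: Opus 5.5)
We will bound the probability that \emph{no} individual of the population lands in $\mathcal{S}^{\ast}_{\varepsilon}$ and take the complement. The argument is the standard one for elitist evolutionary algorithms with a positive-measure target set, adapted to Algorithm~\ref{alg:CHDE}.

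\textbf{Step 1: per-individual hitting probability.} Fix one individual $\mathbf{x}^{(iG)}$. Its offspring is produced by the mutation \eqref{eq:mutation}, the boundary correction \eqref{eq:normalize_mutant}, the crossover \eqref{eq:crossover}, and the repair step. We model this as drawing $\mathbf{v}^{(iG)}$ from a reproduction kernel on the continuous search box $[0,1]^{len}$, which the decoding maps into decision variables. With probability $P_{\mathrm{ep}}$ the mutation is effective, meaning the trial vector is genuinely perturbed. Conditional on that event, the probability that the decoded trial solution lies in $\mathcal{S}^{\ast}_{\varepsilon}$ is bounded below by $\mu(\mathcal{S}^{\ast}_{\varepsilon})$. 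This quantity is exactly the measure of $\mathcal{S}^{\ast}_{\varepsilon}$ with respect to the reproduction kernel, as named in the statement.

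We argue this measure is positive. The objective $f$ is continuous in $(\boldsymbol\theta,\boldsymbol\varsigma,\boldsymbol\alpha)$ for fixed modes $(\qa,\qb)$, because the SE expressions in Propositions~\ref{Prop:SE:DLPZF} and~\ref{Prop:SE:ULZFZF} are continuous. The modes are piecewise constant in the encoding. Hence the $\varepsilon$-neighbourhood of $\mathbf{x}^{\ast}$ in objective value contains an open set of encoded vectors. The CHDE repair keeps served UEs feasible, so this open set lies inside the (repaired) feasible space. Combining the two factors, a single individual hits $\mathcal{S}^{\ast}_{\varepsilon}$ in one reproduction with probability at least $\mu(\mathcal{S}^{\ast}_{\varepsilon})P_{\mathrm{ep}}$.

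\textbf{Step 2: independence across individuals.} For each $i$, the random choices in Algorithm~\ref{alg:CHDE} (the indices $r_1,r_2$, the choice of pbest, the $\mathcal{U}[0,1)$ draws, and $j_{rand}$) are made independently. So, conditional on the current population, the events ``individual $i$ misses $\mathcal{S}^{\ast}_{\varepsilon}$'' are independent. Each such event has probability at most $1-\mu(\mathcal{S}^{\ast}_{\varepsilon})P_{\mathrm{ep}}$. Therefore the probability that all $I$ individuals miss is at most $(1-\mu(\mathcal{S}^{\ast}_{\varepsilon})P_{\mathrm{ep}})^{I}$.

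\textbf{Step 3: elitism.} The selection rule replaces $\mathbf{x}^{(iG)}$ only if $f(\mathbf{v}^{(iG)})\ge f(\mathbf{x}^{(iG)})$. Consequently, once some individual enters $\mathcal{S}^{\ast}_{\varepsilon}$, its slot never loses fitness, and the population intersects $\mathcal{S}^{\ast}_{\varepsilon}$ forever after. Taking the complement of the event in Step 2 then gives \eqref{eq:convergence}, with $\mu(\mathcal{S}^{\ast})$ read as $\mu(\mathcal{S}^{\ast}_{\varepsilon})$. Iterating over $G$ generations would further improve the exponent to $IG$, but this is not needed for the stated bound.

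\textbf{Main obstacle.} The delicate point is Step 1: justifying a uniform positive lower bound $\mu(\mathcal{S}^{\ast}_{\varepsilon})$ for every parent. This is hard because the DE kernel is population-dependent, and after the boundary correction \eqref{eq:normalize_mutant} it may not have full support. I would first invoke the convention of \cite{chien2024active}, where $\mu$ is \emph{defined} as the infimum over populations of this hitting measure and is assumed positive. Should that assumption be unavailable, I would instead argue full support via the difference term $\mathsf{F}(\mathbf{x}^{(r_1G)}-\mathbf{x}^{(r_2G)})$ together with the random crossover mask.
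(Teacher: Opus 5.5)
The paper gives no proof of this lemma; it is imported from \cite{chien2024active}, so your argument has to stand on its own. Steps~1--2 are the standard independent-trials bound. They are fine once $\mu(\mathcal{S}^{\ast}_{\varepsilon})$ is read, as the statement intends, as a uniform lower bound on the per-individual hitting probability. Conditional independence given $\mathcal{P}^{(G)}$ is legitimate, because Algorithm~\ref{alg:CHDE} builds every trial from the same parent population before setting $\mathcal{P}^{(G+1)}\leftarrow\mathcal{B}^{(G)}$. This proves the inequality for the offspring set $\{\mathbf{v}^{(iG)}\}_{i=1}^{I}$.

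\textbf{The gap is Step~3}, which you use to transfer the bound to the population. Monotone fitness in a slot does not imply membership in $\mathcal{S}^{\ast}_{\varepsilon}$. That set is two-sided and restricted to the feasible set $\mathcal{S}$ of \eqref{P:SE}, whereas CHDE selects on the repaired fitness \eqref{eq:update_fitness}. An individual whose repair drops a UE violates \eqref{UL:QoS:cons} or \eqref{DL:QoS:cons}, so it lies outside $\mathcal{S}$. Yet its fitness can exceed $f(\mathbf{x}^{\ast})$, since dropping a poorly served UE frees power and removes interference. Two consequences follow:
\begin{itemize}
\item a trial $\mathbf{v}^{(iG)}\in\mathcal{S}^{\ast}_{\varepsilon}$ can be rejected in favour of such a parent;
\item an individual already in $\mathcal{S}^{\ast}_{\varepsilon}$ can be displaced by a fitter infeasible one.
\end{itemize}
So ``its slot never loses fitness'' does not give ``the population intersects $\mathcal{S}^{\ast}_{\varepsilon}$ forever after.'' The same flaw affects your remark about the exponent $IG$, and your phrase ``(repaired) feasible space'' quietly replaces $\mathcal{S}$ by a different set. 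To close the gap, either state the bound for the offspring set, or restrict to runs in which no evaluated individual has a UE dropped. In the latter case every fitness value is at most $f(\mathbf{x}^{\ast})$, and $f(\mathbf{x}^{\ast})-\varepsilon\le f(\text{old})\le f(\text{new})\le f(\mathbf{x}^{\ast})$ makes selection preserve membership.

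\textbf{Positivity of $\mu$.} Your positivity argument and the fallback in your ``main obstacle'' would not work, although the inequality does not need them. It holds trivially when $\mu(\mathcal{S}^{\ast}_{\varepsilon})=0$; positivity only matters for Lemma~\ref{lemma:convergence_to_optimal}.
\begin{itemize}
\item Given $\mathcal{P}^{(G)}$ and fixed $\mathsf{F}$, the mutant \eqref{eq:mutation} takes finitely many values (over the choices of pbest, $r_1$, $r_2$). The correction \eqref{eq:normalize_mutant} is deterministic, and \eqref{eq:crossover} has at most $2^{len}$ masks. So the trial vector has finite support built from the current population.
\item For a population concentrated in a small region away from $\mathcal{S}^{\ast}_{\varepsilon}$, the hitting probability is therefore exactly $0$. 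The difference term and the crossover mask cannot supply full support, so for plain DE the infimum you propose as the definition of $\mu$ is $0$. Positivity must be assumed, or enforced by adding a full-support perturbation.
\item The continuity argument also fails. The repaired fitness jumps whenever a UE crosses its SE threshold, which is exactly where the constrained optimum sits when a QoS constraint is active.
\item The decoding gives every DL AP the same power fraction $x^{(iG)}_{len}$, so $\mathbf{x}^{\ast}$ need not be encodable at all.
\end{itemize}
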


\begin{Lemma}
    \label{lemma:convergence_to_optimal}\cite{chien2024active}  Let $\mathcal{P}^{(G)}$ be the population at the $G-th$ generation. Then, the convergence to the global optimum by CHDE is
        \begin{equation}
    \lim_{G\rightarrow \infty}\  \mathsf{Pr}\left(\mathcal{P}^{(G)}\cap \mathcal{S}^*_{\varepsilon} \neq \varnothing \right) = 1.
\end{equation}
\end{Lemma}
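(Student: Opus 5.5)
The plan is the standard elitist-Markov-chain argument for evolutionary algorithms, built on Lemma~\ref{lemma:convergence_probability}. Let $E_G$ be the event $\mathcal{P}^{(G)}\cap\mathcal{S}^*_\varepsilon\neq\varnothing$. I would establish two facts. The first is monotonicity: once $E_G$ holds, $E_{G+1}$ holds. The second is a uniform lower bound $p>0$ on the probability of entering $\mathcal{S}^*_\varepsilon$ in any single generation, whatever the current population. With these, $\mathsf{Pr}(\overline{E_G})\le(1-p)^{G-1}\to 0$.

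\textbf{Monotonicity.} This follows from the greedy one-to-one selection in Algorithm~\ref{alg:CHDE}. Individual $i$ is replaced only if $f(\mathbf{v}^{(iG)})\ge f(\mathbf{x}^{(iG)})$. The fitness after the repair step \eqref{eq:update_fitness} is the value of a repaired solution that is feasible for \eqref{P:SE} restricted to the served UEs. So an individual already in $\mathcal{S}^*_\varepsilon$ can only be replaced by one with fitness at least as large. Since $f\le f(\mathbf{x}^*)$ on the feasible set, the replacement also lies in $\mathcal{S}^*_\varepsilon$. Hence the sequence $\max_i f(\mathbf{x}^{(iG)})$ is nondecreasing, and $\mathsf{Pr}(E_{G+1}\mid E_G)=1$.

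\textbf{Uniform entry probability.} Fix any population not meeting $\mathcal{S}^*_\varepsilon$. Each trial vector $\mathbf{v}^{(iG)}$ is produced by the mutation \eqref{eq:mutation}, the boundary correction \eqref{eq:normalize_mutant}, and binomial crossover \eqref{eq:crossover}. Its distribution over $[0,1]^{len}$ gives positive mass to every open set. Here I would use that $\mathcal{S}^*_\varepsilon$, after the continuous decoding of $\theta,\varsigma,\alpha$ and the threshold decoding of $(\qa,\qb)$, has positive Lebesgue measure $\mu(\mathcal{S}^*_\varepsilon)$. This holds because $f$ is continuous in the continuous variables for fixed binary modes and fixed served sets, so a neighbourhood of $\mathbf{x}^*$ stays $\varepsilon$-optimal. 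Lemma~\ref{lemma:convergence_probability} then gives, per generation,
\begin{equation*}
\mathsf{Pr}(E_{G+1}\mid \overline{E_G})\ \ge\ 1-\big(1-\mu(\mathcal{S}^*_\varepsilon)P_{\mathrm{ep}}\big)^{I}=:p>0,
\end{equation*}
since the $I$ trial vectors are generated independently.

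\textbf{Conclusion.} Combining the two facts gives $\mathsf{Pr}(\overline{E_{G+1}})=\mathsf{Pr}(\overline{E_{G+1}}\mid\overline{E_G})\,\mathsf{Pr}(\overline{E_G})\le(1-p)\,\mathsf{Pr}(\overline{E_G})$. Iterating yields $\mathsf{Pr}(E_G)\ge 1-(1-p)^{G-1}\to1$. The main obstacle is the uniform positivity in the second step. Under DE/current-to-$p$best/1, the mutant is a deterministic affine combination of population members, and it can degenerate when the population collapses, for example $\mathbf{x}^{(r_1G)}=\mathbf{x}^{(r_2G)}$. To handle this I would lean on the randomness of the crossover and $j_{rand}$, together with the random choice of the scale factor, interpreting $P_{\mathrm{ep}}$ as a lower bound on the effective exploration probability, as is implicitly assumed in Lemma~\ref{lemma:convergence_probability} and in~\cite{chien2024active}. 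I would take that lemma's bound to hold uniformly in $G$ rather than re-derive it.
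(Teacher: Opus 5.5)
\textbf{The key step that fails is monotonicity.} The paper gives no proof of this lemma; it imports it from \cite{chien2024active}, which relies on the same elitist skeleton you use: the per-generation hitting bound of Lemma~\ref{lemma:convergence_probability}, plus the fact that good individuals are never lost. The skeleton is right, but that second fact does not hold for CHDE as defined here.

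Selection in Algorithm~\ref{alg:CHDE} compares the repaired fitness \eqref{eq:update_fitness}. That fitness is also defined for individuals whose decoded solution violates \eqref{UL:QoS:cons}--\eqref{DL:QoS:cons} and therefore has some UEs switched off. Such an individual lies outside the feasible set $\mathcal{S}$ of \eqref{P:SE}, and its repaired fitness is not bounded by $f(\mathbf{x}^*)$. Dropping a UE with a poor channel removes a binding QoS constraint and frees the power and interference spent on it, so the remaining sum SE can strictly exceed the constrained optimum.

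Your sentence ``since $f\le f(\mathbf{x}^*)$ on the feasible set, the replacement also lies in $\mathcal{S}^*_\varepsilon$'' therefore does not follow, because the replacement need not be feasible. An $\varepsilon$-optimal feasible individual can be overwritten by an infeasible trial with larger repaired fitness. So $E_G$ is not absorbing, and your identity $\mathsf{Pr}(\overline{E_{G+1}})=\mathsf{Pr}(\overline{E_{G+1}}\mid\overline{E_G})\,\mathsf{Pr}(\overline{E_G})$ is unjustified.

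The same defect breaks your entry step. A trial vector landing in $\mathcal{S}^*_\varepsilon$ joins $\mathcal{P}^{(G+1)}$ only if it beats its own parent. That fails whenever the parent is an infeasible individual with repaired fitness above $f(\mathbf{x}^*)$. Once every slot holds such an individual, no member of $\mathcal{S}^*_\varepsilon$ can ever be accepted again, so the claimed limit fails whenever this happens with positive probability.

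\textbf{How to close it.} You have two options.
\begin{itemize}
\item Define optimality with respect to the fitness actually used for selection: $f^\star=\sup_{[0,1]^{len}}f$ and $\mathcal{S}^*_\varepsilon=\{\mathbf{x}: f(\mathbf{x})\ge f^\star-\varepsilon\}$. This is an upper level set, so greedy one-to-one selection gives both absorption and automatic acceptance. It changes the statement, though: the target is then the best repaired configuration, possibly with dropped UEs, not the optimum of \eqref{P:SE}.
\item Assume a selection rule that ranks every feasible individual above every infeasible one. CHDE does not have such a rule.
\end{itemize}
You also tacitly need $\mathcal{S}\neq\varnothing$, which fails in exactly the infeasible circumstances the paper targets. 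Moreover, if QoS constraints are active at $\mathbf{x}^*$, no neighbourhood of $\mathbf{x}^*$ lies in $\mathcal{S}$, so continuity alone does not give $\mu(\mathcal{S}^*_\varepsilon)>0$.

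\textbf{Your positivity justification is false as written.} Conditional on $\mathcal{P}^{(G)}$, the mutant \eqref{eq:mutation} with the fixed input $\mathsf{F}$ takes finitely many values, since there are finitely many choices of $p$best, $r_1$ and $r_2$. The correction \eqref{eq:normalize_mutant} is deterministic. Binomial crossover \eqref{eq:crossover} only picks each coordinate from $\mathbf{u}^{(iG)}$ or $\mathbf{x}^{(iG)}$.

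So the trial vector has finite support and does not give positive mass to every open set. A collapsed population simply stagnates. Neither crossover, $j_{rand}$, nor a random scale factor (which Algorithm~\ref{alg:CHDE} does not use) can spread mass continuously.

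The uniform bound $p>0$ therefore cannot be derived from the DE operators. It is an extra hypothesis on the algorithm, which is what Lemma~\ref{lemma:convergence_probability} and \cite{chien2024active} effectively assume. You may invoke it, but state it explicitly as an assumption and drop the open-set argument.
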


Lemmas~\ref{lemma:convergence_probability} and \ref{lemma:convergence_to_optimal} indicate that CHDE converges to the global optimum from an initial population within the feasible domain after a sufficiently large number of generations. Therefore, we continuously analyze the ideal number of generations for CHDE to converge toward the space $\mathcal{S}_{\varepsilon}^{*}$.

\begin{Definition}
    Assume that $\mathbf{x}^{*}_{\varepsilon}$ is a solution in the solution space $\mathcal{S}^{*}_{\varepsilon}$, let $f_{best}^{(G)} = f(\mathbf{x}^{(bestG)})$ be the best fitness of population $\mathcal{P}^{(G)}$, define $\Delta^{(G)}_{\varepsilon} = \max \{f(\mathbf{x}^{*}_{\varepsilon}) - f_{best}^{(G)}, 0\}$, which is used to measure the distance of the population to the $\varepsilon-$optimal solution. Obviously, the sequence $\{\Delta^{(G)}_{\varepsilon}\}_{G = 1}^{\infty}$ generated by CHDE is a non-negative stochastic process. Then, the stopping time of CHDE is $T_{\varepsilon} = \min\{G \geq 1 : \Delta^{(G)}_{\varepsilon} = 0\}$, which is the first hitting time \cite{first_hitting_time_yu2008new} for the algorithm towards an $\varepsilon-$optimal solutions for problem (\ref{P:SE}). 
\end{Definition}

\begin{Definition}
    Assume that $\{\Delta^{(G)}_{\varepsilon}\}_{G = 1}^{\infty}$ is a stochastic process on a probability space $(\Omega, \mathcal{F}, \mathsf{Pr})$. The natural filtration of $\mathcal{F}$ is denoted by $\mathcal{F}^{(G)} = \sigma(\Delta^{(1)}_{\varepsilon}, \Delta^{(2)}_{\varepsilon}, \ldots, \Delta^{(G)}_{\varepsilon})$. At the generation $G$, the average gain is expressed as follows
    \begin{equation}
        g^{(G)} = \mathbb{E}\left\{\Delta^{(G)}_{\varepsilon} - \Delta^{(G+1)}_{\varepsilon} \big| \mathcal{F}^{(G)}\right\}.
        \label{eq:gain_average}
    \end{equation}  
\end{Definition}

\begin{figure*}
\centering
\begin{subfigure}{0.32\textwidth}
    \includegraphics[width=\linewidth]{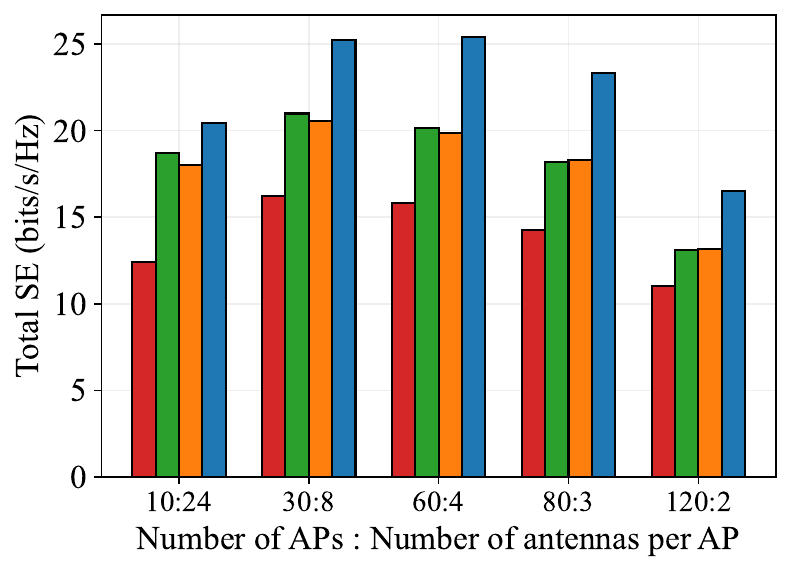}
    \vspace{-0.5em}
    \caption{$K_u = 3, K_d = 3$}
    \label{fig:first_objective_optimization}
\end{subfigure}
\hfill
\begin{subfigure}{0.32\textwidth}
    \includegraphics[width=\textwidth]{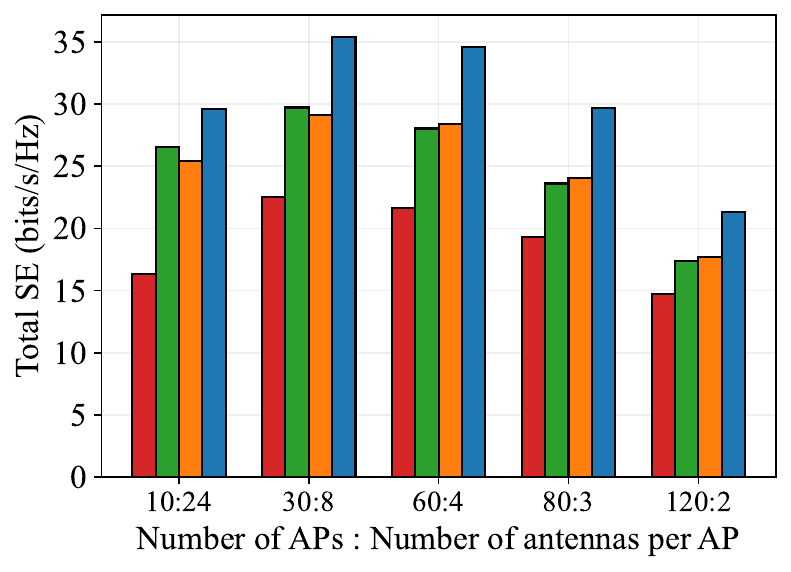}
    \vspace{-0.5em}
    \caption{$K_u = 5, K_d = 5$}
    \label{fig:second_objective_optimization}
\end{subfigure}
\begin{subfigure}{0.32\textwidth}
    \includegraphics[width=\textwidth]{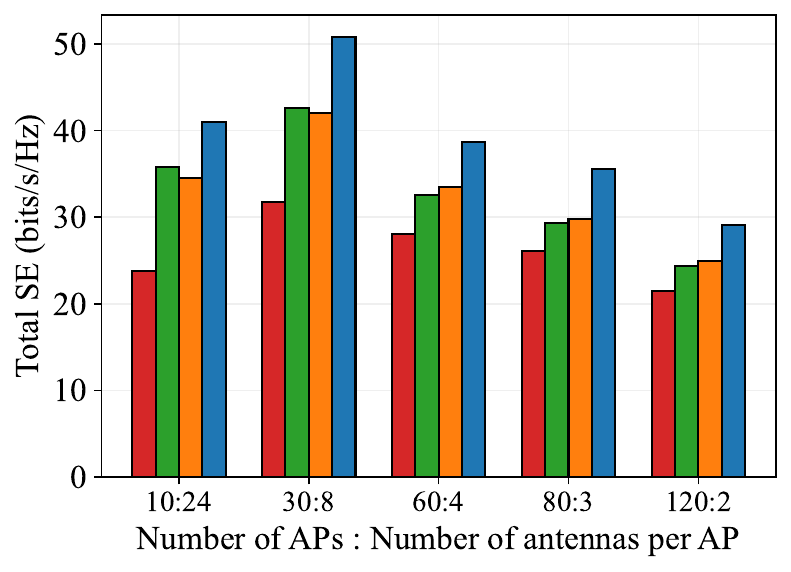}
    \vspace{-0.5em}
    \caption{$K_u = 10, K_d = 10$}
    \label{fig:third_objective_optimization}
\end{subfigure}
\begin{subfigure}{1\textwidth}
    \centering
    \includegraphics[width=0.6\textwidth, trim = 1.5cm 26.7cm 25.5cm 2.5cm, clip = true]{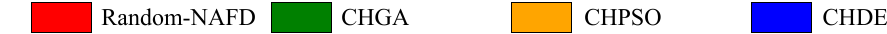}
\end{subfigure}        
\caption{Total SE of different benchmarks comprising CHDE, CHGA, CHPSO, and Random-NAFD.}
\label{fig:objective_otpimization}
\vspace{-1.2em}
\end{figure*}
In \cite{first_hitting_time_yu2008new}, evolutionary process $\{\mathcal{P}^{(G)}\}_{G = 1}^{\infty}$ of CHDE is modeled by a Markov chain. Accordingly, the behavior of population $\mathcal{P}^{(G+1)}$ 
depends only on $\mathcal{P}^{(G)}$. Therefore, the sequence $\{\Delta^{(G)}_{\varepsilon}\}_{G = 1}^{\infty}$ is also referenced to as a Markov chain, and the gain in  (\ref{eq:gain_average}) can be simplified to $g^{(G)} = \mathbb{E}\left\{\Delta^{(G)}_{\varepsilon} - \Delta^{(G + 1)}_{\varepsilon}\big| \Delta^{(G)}_{\varepsilon}\right \}$. 
Theorem \ref{theo:first_fitting_time} bounds the expected first hitting time of CHDE via average gain.
\begin{Theorem} \label{theo:first_fitting_time}
Let $\{\Delta^{(G)}_{\varepsilon}\}_{G = 1}^{\infty}$ be a stochastic process associated with CHDE, where $\Delta^{(G)}_{\varepsilon} \geq 0, \forall G \geq 1$. Suppose $h: (0, \Delta^{(1)}_{\varepsilon}] \rightarrow \mathbb{R}^{+}$ be a monotonically increasing continuous function. If $\mathbb{E}\left\{\Delta^{(G)}_{\varepsilon} - \Delta^{(G + 1)}_{\varepsilon} \big| \Delta^{(G)}_{\varepsilon}\right\} \geq h(\Delta^{(G)}_{\varepsilon})$ , when $\Delta^{(G)}_{\varepsilon} > 0$, then
  $\mathbb{E} \left\{T_\varepsilon \big| \Delta^{(1)}_{\varepsilon}\right\} \leq 1 + \int_{0}^{\Delta^{(1)}_{\varepsilon}} \frac{1}{h(t)}dt.$
    
\end{Theorem}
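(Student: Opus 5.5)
The statement is a variable-drift (average-gain) theorem, and I plan to reduce it to the additive drift theorem via a potential function. Define $U(x) \triangleq \int_{0}^{x} \frac{1}{h(t)}\,dt$ for $x\in[0,\Delta^{(1)}_{\varepsilon}]$, with $U(0)=0$. I assume this integral is finite, since otherwise the bound is trivial. Because $h>0$ is continuous and monotonically increasing, $U$ is nonnegative, increasing and concave on its domain. The bound to be shown is then $\mathbb{E}\{T_\varepsilon\,|\,\Delta^{(1)}_\varepsilon\}\le 1+U(\Delta^{(1)}_\varepsilon)$. Since CHDE keeps the better of parent and trial, $f_{best}^{(G)}$ is nondecreasing, so $\Delta^{(G)}_\varepsilon \le \Delta^{(1)}_\varepsilon$ for all $G$ and $U$ is always defined along the process.

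The first key step is a pointwise inequality: for $x>0$ and any $y\in[0,\Delta^{(1)}_\varepsilon]$,
\begin{equation*}
U(x)-U(y)\ \ge\ \frac{x-y}{h(x)}.
\end{equation*}
I would split into two cases. If $y<x$, then $U(x)-U(y)=\int_y^x \frac{dt}{h(t)}\ge \frac{x-y}{h(x)}$, because $h(t)\le h(x)$ on $[y,x]$; this includes $y=0$, where the inequality reads $U(x)\ge x/h(x)$. If $y\ge x$, then $U(y)-U(x)=\int_x^y \frac{dt}{h(t)}\le \frac{y-x}{h(x)}$, because $h(t)\ge h(x)$ there, and this rearranges to the same inequality. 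Applying it with $x=\Delta^{(G)}_\varepsilon>0$ and $y=\Delta^{(G+1)}_\varepsilon$, taking conditional expectations, and using the Markov property together with the hypothesis gives
\begin{equation*}
\mathbb{E}\{U(\Delta^{(G)}_\varepsilon)-U(\Delta^{(G+1)}_\varepsilon)\,|\,\Delta^{(G)}_\varepsilon\}\ \ge\ \frac{\mathbb{E}\{\Delta^{(G)}_\varepsilon-\Delta^{(G+1)}_\varepsilon\,|\,\Delta^{(G)}_\varepsilon\}}{h(\Delta^{(G)}_\varepsilon)}\ \ge\ 1 .
\end{equation*}
The potential $U(\Delta^{(G)}_\varepsilon)$ therefore drifts toward $0$ at unit rate whenever the target has not been hit.

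The second step is additive drift via optional stopping. Set $Y_G \triangleq U(\Delta^{(G\wedge T_\varepsilon)}_\varepsilon)+(G\wedge T_\varepsilon)$. By the previous step, $\{Y_G\}$ is a supermartingale with respect to $\mathcal{F}^{(G)}$. Indeed, on $\{T_\varepsilon>G\}$ the increment has conditional mean at most $1-1=0$, and after $T_\varepsilon$ the process is constant. Hence $\mathbb{E}\{Y_n\}\le Y_1 = U(\Delta^{(1)}_\varepsilon)+1$ for every $n$. Since $U\ge 0$, this gives $\mathbb{E}\{n\wedge T_\varepsilon\}\le 1+U(\Delta^{(1)}_\varepsilon)$. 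Letting $n\to\infty$ and applying monotone convergence yields $\mathbb{E}\{T_\varepsilon\,|\,\Delta^{(1)}_\varepsilon\}\le 1+\int_0^{\Delta^{(1)}_\varepsilon}\frac{dt}{h(t)}$. This also shows $T_\varepsilon<\infty$ almost surely, so no integrability of $T_\varepsilon$ needs to be assumed in advance.

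I expect the main obstacle to be the pointwise inequality in the boundary and overshoot cases. When the process jumps straight to $0$, the last step must still be credited with drift at least one, and when $\Delta$ increases, monotonicity of $h$ must be used in the opposite direction. Both cases are handled by the two-case split above, and using $U(0)=0$ rather than an extra additive constant is what produces exactly the leading $1$ in the bound instead of $2$. A secondary technical point is justifying the supermartingale step with the truncated time $n\wedge T_\varepsilon$ so that optional stopping is legitimate; the truncation takes care of this.
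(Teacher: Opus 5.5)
Your proof is correct. It follows the same overall reduction as the paper: turn the average-gain condition into unit drift of a potential built from $\int_0^x dt/h(t)$ using monotonicity of $h$, then apply additive drift. The execution differs in ways worth noting.

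The paper uses the potential $\chi(0)=0$, $\chi(x)=1+\int_0^x dt/h(t)$ for $x>0$, so a jump to zero is credited trivially by the extra $1$. It then argues case by case on whether $\Delta^{(G+1)}_\varepsilon=0$. Strictly speaking, this splits a conditional expectation given $\mathcal{F}^{(G)}$ according to an event that is not $\mathcal{F}^{(G)}$-measurable, and its step (a) only covers decreases of $\Delta$. Finally, it cites an additive-drift lemma, $\mathbb{E}\{T\}\le \zeta_1/\alpha$, which assumes $\mathbb{E}\{T\}<\infty$ in advance; the paper never verifies this for its potential.

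You instead keep $U(0)=0$ and observe that $U(x)\ge x/h(x)$ already credits the final jump. You replace the case split by the single pointwise inequality $U(x)-U(y)\ge (x-y)/h(x)$, which also holds for increases within $[0,\Delta^{(1)}_\varepsilon]$. You then prove the additive-drift step yourself via the stopped supermartingale, getting $T_\varepsilon<\infty$ almost surely as a by-product rather than as a hypothesis.

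Your remark about the constant is also accurate. With generation counting from $G=1$, a correct additive-drift step applied to the paper's $\chi$ gives only $2+\int_0^{\Delta^{(1)}_\varepsilon}dt/h(t)$. The paper reaches $1+\int_0^{\Delta^{(1)}_\varepsilon}dt/h(t)$ because it applies the bound $\zeta_1/\alpha$, which counts transitions, to the generation index. The extra $1$ built into $\chi$ and the missing $1$ from the indexing then cancel.

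In short, the paper's route is shorter because it outsources the drift theorem. Yours is self-contained, rigorous on the points above, and produces the stated constant for the stated indexing directly.
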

\begin{proof}
    See Appendix~\ref{proof:first_hitting_time}.
\end{proof}
Theorem~\ref{theo:first_fitting_time} provides a theoretical bound on convergence speed, linking the expected stopping time of CHDE to the magnitude of improvements obtained per generation.

\textit{\textbf{Complexity Analysis:}} We now analyze the computational complexity of the proposed algorithm based on basic operations with the complexity of $\mathcal{O}(1)$, including addition, subtraction, multiplication, and inversion. Based on Remark \ref{remark_1} and Remark \ref{remark_2}, the computational complexity of the fitness function $f(\cdot)$, which is defined in \eqref{eq:fitness_cal}, is independent of the linear proceeding technique $\xi$. Without loss of generality, we analyze the computational complexity of the fitness function with PZF precoding. For each $\UEdk$, evaluating the SE $\mathcal{S}_{\dl,k}^{\PZF}$ requires a computational complexity in the order of $\mathcal{O}(MK_d)$. The complexity to compute the SE for $\UEul$ is in the order of $\mathcal{O}(MK_u + M^2K_d)$. Therefore, the complexity to compute the fitness function is $\mathcal{O}(MK_d^2 + MK_u^2 + M^2K_dK_u)$. Each individual can be evaluated up to a maximum of ($K_d + K_u$) times to ensure constraints \eqref{UL:QoS:cons} and \eqref{DL:QoS:cons} (the maximum of evaluations occurs when there is still one UE who has not yet met the lower SE requirement after each evaluation). The complexity to evaluate each individual is $\mathcal{O}((K_d + K_u)(MK_d^2 + MK_u^2 + M^2K_dK_u))$. Let $FEO = (K_d + K_u)(MK_d^2 + MK_u^2 + M^2K_dK_u)$. Regarding the computational complexity of CHDE, the population initialization step requires $\mathcal{O}(I\times len + I\times FEO)$, where $len = 2M+K_u + MK_u + MK_d +1$. In each generation, the population sorting requires $\mathcal{O}(I\log(I)$ to select the top best solution. The reproduction step requires $\mathcal{O}(I\times len)$. The individual repairing and evaluation processing within the offspring requires  $\mathcal{O}(I\times FEO)$. The selection step requires $\mathcal{O}(I)$. Therefore, the computing complexity of the proposed algorithm is $\mathcal{O}(GI(K_u^3 + K_d^3 + MK_u^2K_d + MK_uK_d^2)$.

\begin{remark}
It is worth noting that the dominant computational cost of the proposed CHDE algorithm lies in the individual repairing and fitness evaluation stages, whose complexity scales as $\mathcal{O}(I \times FEO)$ per generation. Importantly, individuals in the population are evaluated independently, making the fitness evaluation inherently parallelizable across individuals. Therefore, although the computational complexity is analyzed under a sequential execution model, the actual wall-clock runtime can be substantially reduced by exploiting parallel computing architectures, such as multi-core Central Processing Units or Graphics Processing Units. This parallelism enables efficient implementation of CHDE in practical systems and helps mitigate the computational delay associated with large-scale network optimization.
\end{remark}

\begin{remark}
The DE-based algorithm is selected to solve  problem~\eqref{P:SE} and address the infeasible circumstances since \eqref{P:SE}  is non-convex, mixed-integer, and highly nonlinear due to the coupling between continuous power allocation and binary AP mode selection variables. Algorithm~\ref{alg:CHDE} is well-suited for such problems because it does not rely on convexity assumptions, no gradient evaluations of the objective functions and constraints, and demonstrates strong performance in exploring complex search spaces. Compared to the other metaheuristics, e.g., genetic algorithm and particle swarm optimization, the DE-based algorithm has been empirically shown to offer faster convergence and higher solution quality for continuous optimization problems.
\end{remark}

\begin{figure*}
\centering
\begin{subfigure}{0.31\textwidth}
    \includegraphics[width=\linewidth]{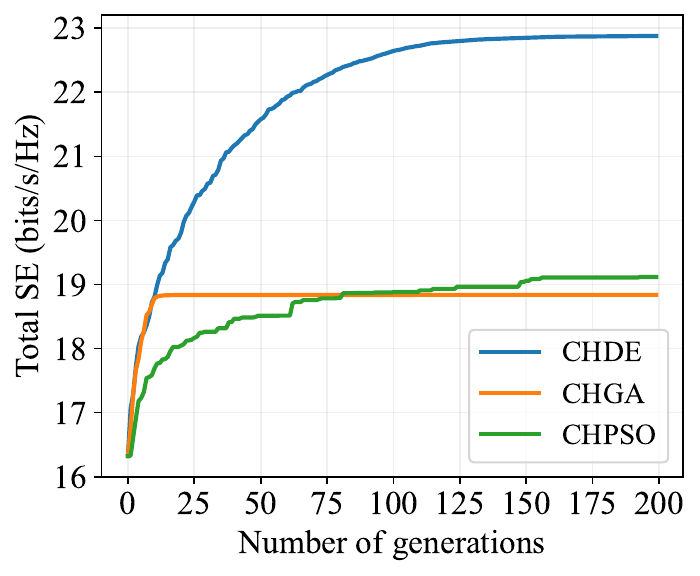}
    \caption{$K_u = 3, K_d = 3$}
    \label{fig:first_convergence}
\end{subfigure}
\begin{subfigure}{0.31\textwidth}
    \includegraphics[width=\textwidth]{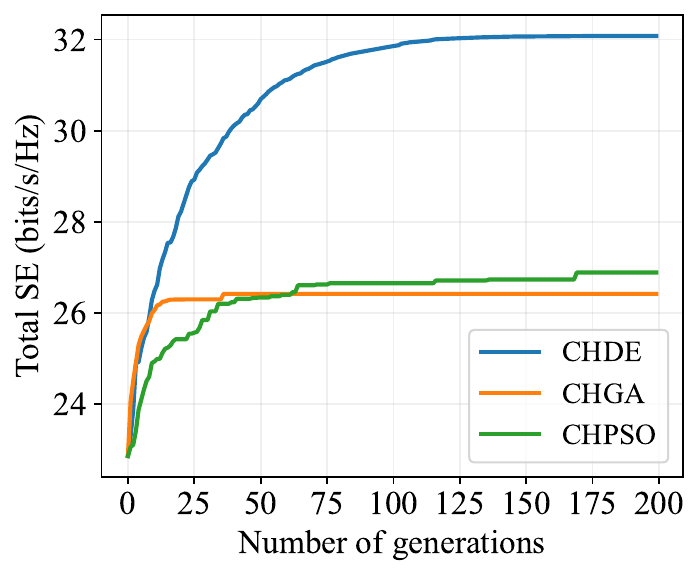}
    \caption{$K_u = 5, K_d = 5$}
    \label{fig:second_convergence}
\end{subfigure}
\begin{subfigure}{0.31\textwidth}
    \includegraphics[width=\textwidth]{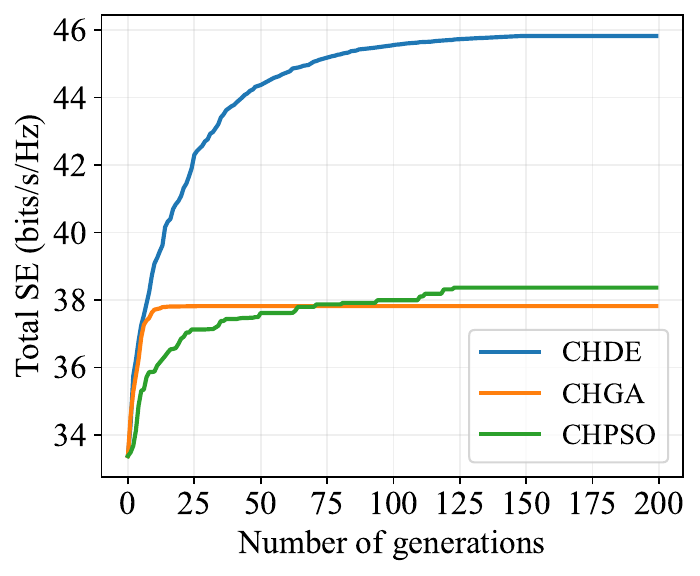}
    \caption{$K_u = 10, K_d = 10$}
    \label{fig:third_convergence}
\end{subfigure}
\caption{Convergence of different benchmarks comprising CHDE, CHGA, and CHPSO.}
\label{fig:Convergence}
 \vspace{-0.1em}
\end{figure*}

\begin{figure}[t]
    \centering
\includegraphics[width=0.75\linewidth]{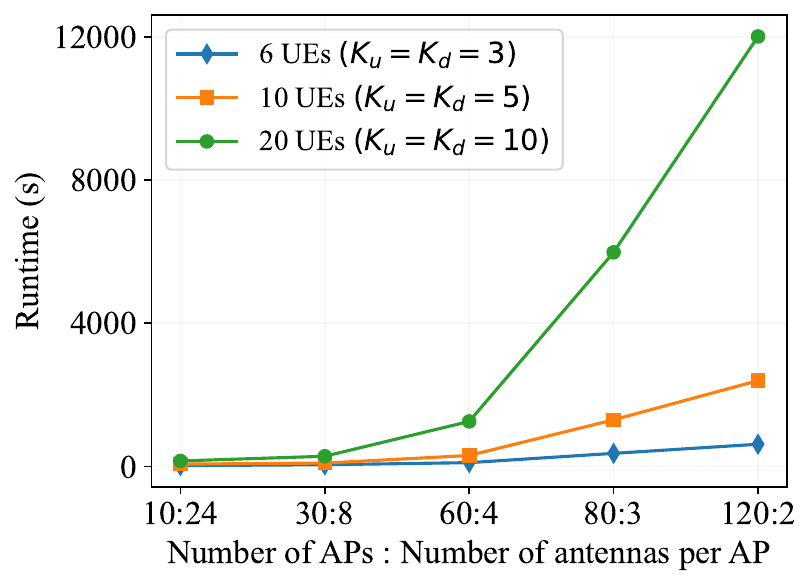}
    \caption{Runtime of CHDE.}
    \label{fig:run_time}
\end{figure}


\section{Numerical Results}~\label{Sec:Numer}
We consider an NAFD CF-mMIMO system in which the APs and UEs are dispersed randomly within a $0.5 \times 0.5$ $\text{km}^2$ square, with the edges wrapped to prevent the boundary effects. Unless otherwise stated, the values of the network parameters are: number of APs $M \in [10,  30,  60, 80, 120]$, number of transmit radio frequency chains $N \in [24,  8,   4, 3, 2]$, number of UL UEs $K_u \in [3, 5, 10]$, number of DL UEs $K_d \in [3, 5, 10]$. The number of AP antennas must be significantly larger than the number of users to avail of the benefits of massive MIMO. In the considered scenarios, this ratio spans from $12$ to $40$. The other system parameters include communication bandwidth is $B = 50$ MHz, length of the coherence block $\tau_c = 200$, length of the pilot $\tau_t = K_u + K_d$, and $\mathcal{S}_{dl}^{o} = \mathcal{S}_{ul}^{o} = \mathcal{S}_{QoS}$. Consequently, the noise power $\Sn = k_BT_0BF$, where Boltzmann constant $k_B = 1.381 \times 10^{-23}~\text{Joules}\backslash^o\text{K}$, temperature noise $T_0 = 290^oK$ and noise figure $F = 9dB$. Let $\tilde{\rho}_d = 0.8$ W, $\tilde{\rho}_u = 0.2$ W and $\tilde{\rho}_t = 0.2$ W be the maximum transmit power of the APs, UL UEs and UL training pilot sequences, respectively. The normalized maximum transmit powers $\rho_d$, $\rho_u$, and $\rho_t$ are obtained by dividing the respective transmit powers by the noise power. We set $\SIm/\sigma_n^2 = 50$ dB. The large-scale fading coefficient $\beta_{mk}$ $(\betamkd, \betamlu, \betakldu)$ is modeled by \cite{emil20TWC} as 
    $\beta_{mk} = 10^{\frac{PL_{mk}^d}{10}}10^{\frac{U_{mk}}{10}}$,
where $10^{\frac{PL_{mk}^d}{10}}$ and  $10^{\frac{U_{mk}}{10}}$ represent the path loss and represents the shadowing effect, respectively. Here, $U_{mk}$ and $PL_{mk}^d$ (in dB) are calculated as 
   $ U_{mk} \!\sim\! \mathcal{N}(0, 4^2),~PL_{mk}^d \!=\! -30.5 \!-\! 36.7\text{log}_{10}\left(d_{mk} /1 \text{m} \right)$,
where $d_{mk}$ is the distance between AP $m$ and UE (AP) $k$, calculated as the minimum distance across various wrap-around scenarios, considering the $10$-meter height difference. The shadowing terms from an AP $m,~\forall m \in \mathcal{M}$ to different UE $k \in \mathcal{K}_d$ $(\ell \in \mathcal{K}_u)$ are modeled as~\cite[Eq. (93)]{Mohammad:JSAC:2023}.

We compare our proposed CHDE algorithm with the baseline scheme, denoted as Random-NAFD,\footnote{For  Random-NAFD, APs are randomly assigned to FD/HD mode. For HD APs, UL/DL operation modes are randomly chosen. Power is equally allocated to UEs. Power control is equally designed for DL APs with full power. There is no LSFD ($\alphml = 1, \forall m, \forall \ell$).} and robust benchmark algorithms for optimization problems that incorporate constraint-handling techniques, including the genetic algorithm \cite{peng2021analysis} (CHGA) and particle swarm optimization \cite{nama2023boosting} (CHPSO). Specifically, we evaluate the performance gain of the proposed model compared to the previous model \cite{Mohammad:JSAC:2023} and the efficiency of CHDE in solving problem \eqref{P:SE}. Furthermore, we analyze the average SE per UE and the number of UEs served by the network when the minimum SE threshold increases. All the numerical results were generated using Python simulation on a personal computer with Intel\textsuperscript{\textregistered} Core i5-13500H CPU and 16 GB RAM. These simulations are primarily intended to provide theoretical insights into the algorithm's behavior and performance trends, rather than to present real-time operation.\footnote{Due to hardware limitations, the system performance and the efficiency of the considered benchmarks are evaluated offline or for systems with low mobility. For practical applications, our algorithm requires a computing system capable of solving the problem~\eqref{P:SE} in sub-milliseconds in order to respond to the rapid change of propagation environments. In more detail,  implementations using compiled languages like C++ or/hardware accelerators (e.g., GPUs, FPGAs, or ASICs) can reduce runtimes by orders of magnitude \cite{mehlhose2022gpu}.}

\begin{table*}[!http]
\caption{Impact of the DE operators on the performance of CHDE}
\vspace{-0.5em}
\resizebox{\textwidth}{!}{\begin{tabular}{lcccccc}
\hline
\multicolumn{1}{c}{\multirow{2}{*}{Operator}}               & \multicolumn{2}{c}{$K_u = K_d = 3$}                                          & \multicolumn{2}{c}{$K_u = K_d = 5$}                                         & \multicolumn{2}{c}{$K_u = K_d = 10$}                                        \\ \cline{2-7} 
\multicolumn{1}{c}{}                                        & \multicolumn{1}{c}{$M = 30 , N = 8$} & \multicolumn{1}{c}{$M = 60, N = 40$} & \multicolumn{1}{c}{$M = 30, N = 8$} & \multicolumn{1}{c}{$M = 60, N = 40$} & \multicolumn{1}{c}{$M = 30, N = 8$} & \multicolumn{1}{c}{$M - 60, N = 40$} \\ \hline
DE\textbackslash{}rand\textbackslash{}1                                                           &  $22.23\pm3.36$                                     & $22.79\pm5.10$                        &       $28.07\pm3.69$                               & $29.45\pm2.68$                        &       $38.64\pm3.14$                               & $49.96\pm6.50$                        \\
DE\textbackslash{}rand\textbackslash{}2                       &   $22.09\pm3.32$                                    & $22.82\pm5.00$                        &       $27.71\pm3.40$                               & $29.54\pm3.12$                        &    $37.29\pm2.19$                                  & $49.24\pm7.55$                        \\
DE\textbackslash{}best\textbackslash{}1                       & $21.96\pm3.41$                                      & $22.53\pm5.22$                        &         $27.30\pm3.27$                             & $29.01\pm2.71$                        &        $37.23\pm1.80$                              & $46.99\pm6.37$                        \\
DE\textbackslash{}best\textbackslash{}2                       & $22.05\pm3.27$                                      & $23.08\pm5.42$                        &         $28.15\pm3.93$                             & $28.45\pm2.74$                        &       $38.90\pm2.81$                               & $49.38\pm6.30$                        \\ 
\multicolumn{1}{l}{CHDE} & \multicolumn{1}{c}{$\mathbf{23.52\pm4.44}$}                 & \multicolumn{1}{c}{$\mathbf{24.85\pm6.52}$}   & \multicolumn{1}{c}{$\mathbf{30.71\pm4.43}$}                & \multicolumn{1}{c}{$\mathbf{31.93\pm3.56}$}   & \multicolumn{1}{c}{$\mathbf{42.59\pm3.61}$}                & \multicolumn{1}{c}{$\mathbf{54.39\pm7.98}$}   \\ \hline
\end{tabular}}
\vspace{ 0.5em}
\label{tab:DE_operator}
\end{table*}

\begin{figure*}[t]
\vspace{-0.2cm}
\centering

\begin{minipage}[t]{0.66\textwidth} 
    \centering
    \begin{subfigure}[t]{0.49\linewidth}
        \centering
        \includegraphics[trim=0 0cm 0cm 0cm,clip,width=\linewidth]{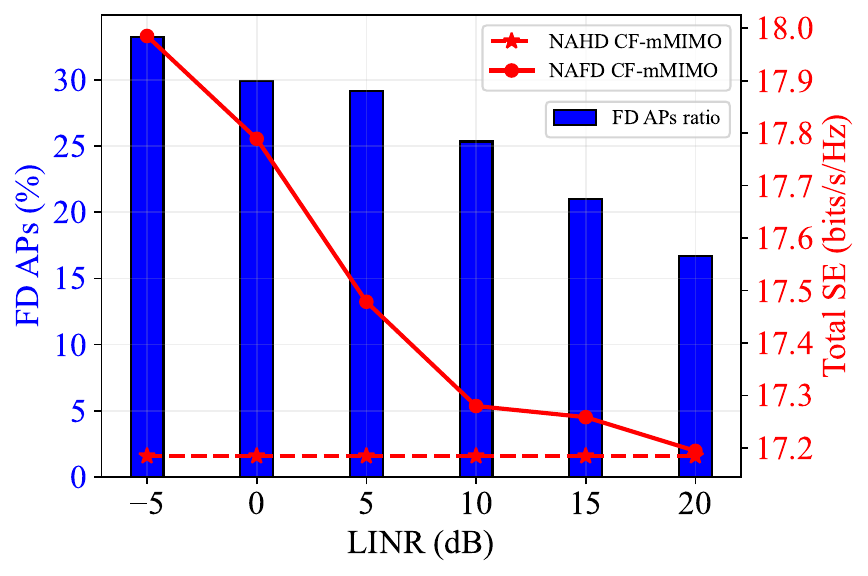}
        \vspace{-0.5em}
        \caption{$M = 20, N = 5$}
        \label{fig:time_slot}
    \end{subfigure}\hfill
    \begin{subfigure}[t]{0.49\linewidth}
        \centering
        \includegraphics[trim=0 0cm 0cm 0cm,clip,width=\linewidth]{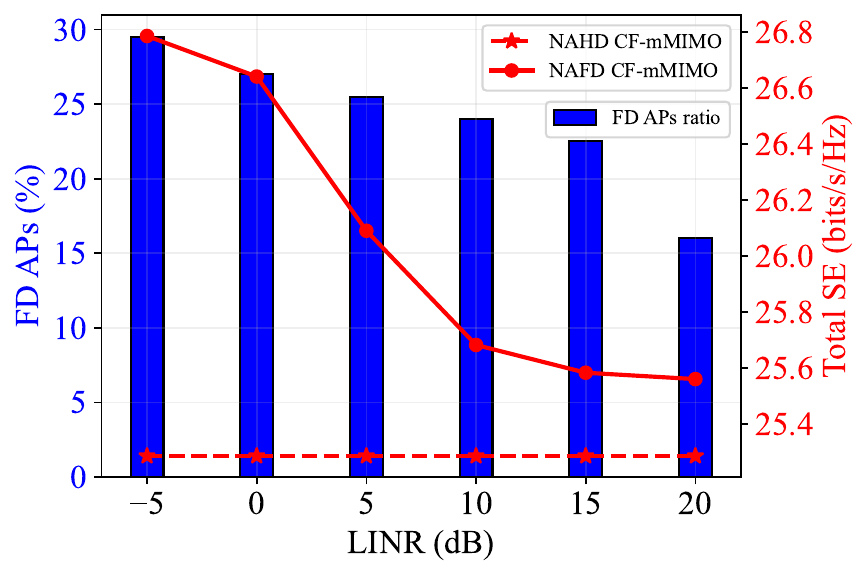}
        \vspace{-0.5em}
        \caption{$M = 40, N = 5$}
        \label{fig:large-scale}
    \end{subfigure}

    \vspace{-0.6em}
\caption{Efficiency of the proposed model compared to the previous model [19] that only includes APs operating in HD mode.}
\label{fig:FD_model}
\end{minipage}
\hfill
\begin{minipage}[t]{0.32\textwidth} 
    \centering
    \includegraphics[trim=0 0cm 0cm 0cm,clip,width=0.9\linewidth]{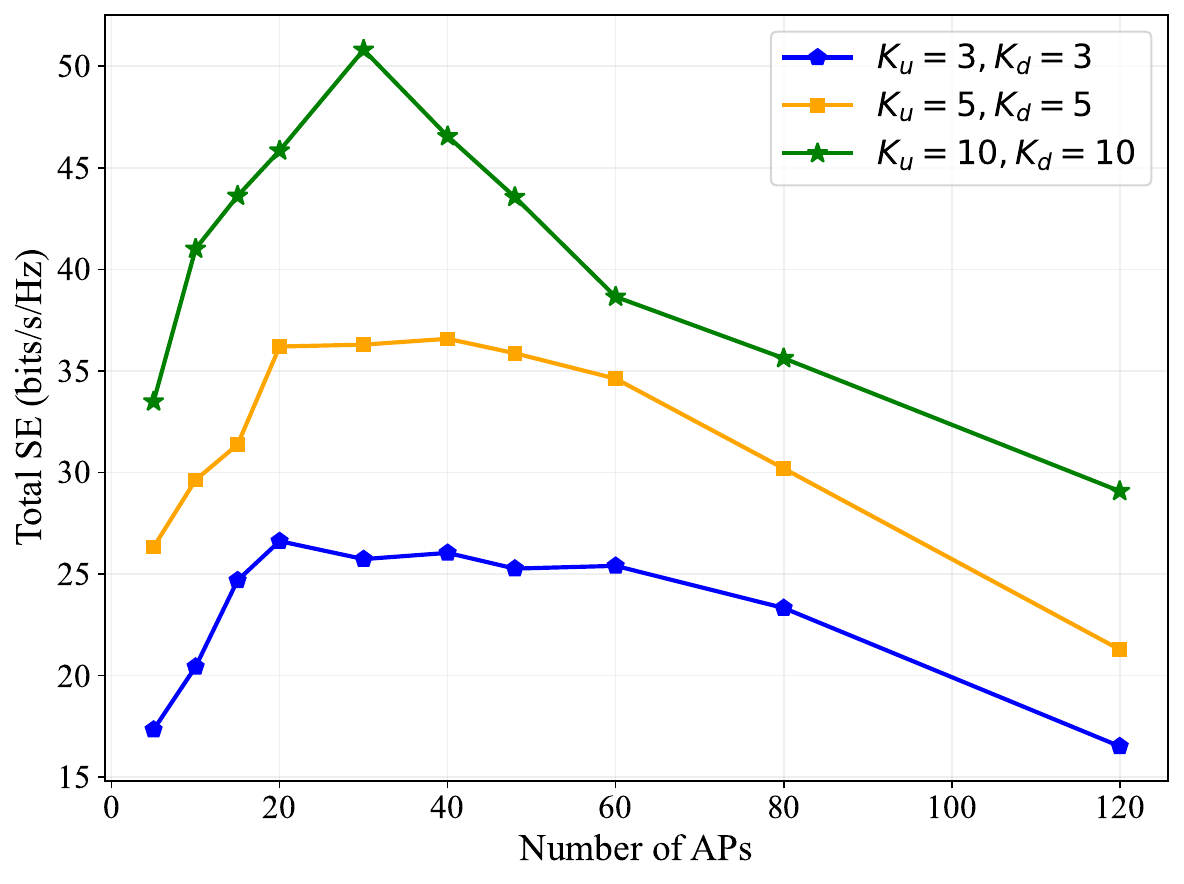}
    \vspace{0.5em}
    \captionof{figure}{Impact of number of APs ($MN = 240$).}
    \label{fig:impact_of_number_of_APs}
\end{minipage}

\vspace{-1.0em}
\end{figure*}

Figure~\ref{fig:objective_otpimization} presents the total SE of all UEs under various system configurations characterized by the number of APs and DL/UL UEs. This paper considers that there are $240$ antennas $(MN = 240)$ divided equally for each AP. The proposed CHDE consistently outperforms the benchmark schemes, achieving performance gains of up to $28\%$, $29\%$, and $82\%$ over CHPSO, CHGA, and Random-NAFD, respectively. For the heuristic scheme Random-NAFD, the average total SE is lower because of the heterogeneous distribution of UEs relative to the APs and the equal power allocation applied without LSFD. Therefore, this verified the necessity for joint optimization of power control and LSFD weight design. Compared with the robust benchmarks, CHDE yields $10\%$–$30\%$ improvements. Problem \eqref{P:SE} is a combination optimization. However, the problem can be regarded as an approximately continuous real optimization when expressed through the individual’s representation. Hence, CHDE outperforms CHPSO and CHGA, as DE is well-suited for continuous variables.

In Fig.~\ref{fig:Convergence}, we plot the convergence behavior of CHGA, CHPSO, and the proposed CHDE, all initialized with identical parameters. While each algorithm significantly improves the total SE throughout the evolutionary process, CHDE surpasses the others in the quality of the final solution. In the early generations, CHGA and CHDE converge faster than CHPSO; however, CHGA exhibits premature convergence as the algorithm becomes trapped in a local optimum. CHPSO improves monotonically over generations, but the gains remain marginal. In contrast, CHDE demonstrates superior ability to escape local optima and enters the stable convergence phase around the $100$-th generation. While CHGA shows faster progress initially, premature convergence limits its effectiveness. In contrast, CHDE ultimately achieves superior performance in both convergence stability and final solution quality.

Figure~\ref{fig:run_time} illustrates the average runtime of CHDE under various scenarios. It is observed that the runtime is influenced by the number of UEs, the number of APs, and the number of antennas per AP. The algorithm runs efficiently when the number of UEs is relatively small, such as $6$ or $10$ UEs, and when the number of APs is appropriately configured. However, as the number of users increases, e.g., to $20$ UEs, the resource allocation process becomes more complex, and evaluating the fitness of individuals consumes more computational resources, leading to a significant increase in runtime. Moreover, increasing the number of APs also increases the algorithm's complexity, while the system performance tends to degrade. For instance, with $6$ UEs, the runtime of CHDE is $52.45$s with $30$ APs, but the runtime increases by $12$ times when using $120$ APs, while the total SE decreases by approximately $1.47$ times. Similarly, for a larger-scale scenario with $20$ UEs, the runtime is $289.2$s with $30$ APs, whereas increasing the number of APs to $120$ results in a $41.5$-fold increase in runtime due to the significantly larger number of optimization variables. Therefore, it is important to configure an appropriate number of APs for the system. Moreover, Fig.~\ref{fig:Convergence} indicates that CHDE enters a stable convergence regime after roughly the $100$-th generation, beyond which performance gains become marginal. This suggests that early-stopping criteria can be employed in practice to reduce the runtime, e.g., when the fitness improvement between successive generations becomes negligible, while preserving near-optimal performance, thereby enhancing the practicality of CHDE for large-scale network optimization.

\begin{figure*}
\centering
\begin{subfigure}{0.31\textwidth}
    \includegraphics[width=\linewidth]{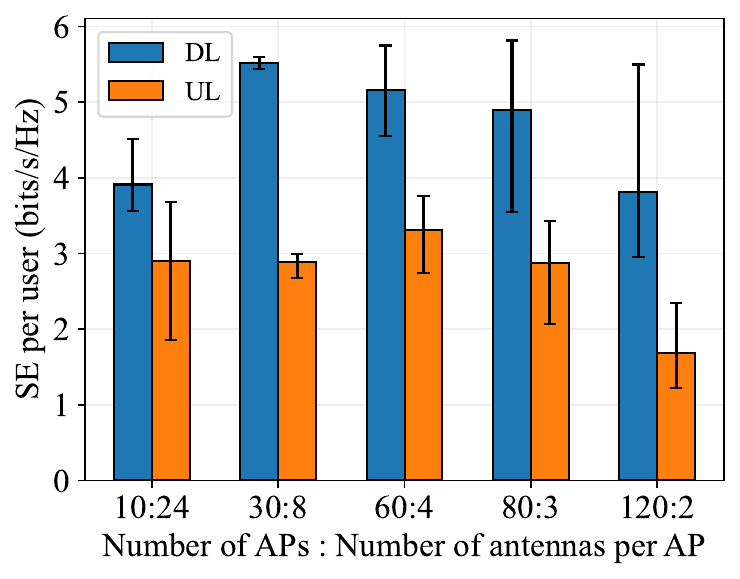}
    \caption{$K_u = 3, K_d = 3$}
    \label{fig:first_SE_per_UE}
\end{subfigure}
\begin{subfigure}{0.31\textwidth}
    \includegraphics[width=\textwidth]{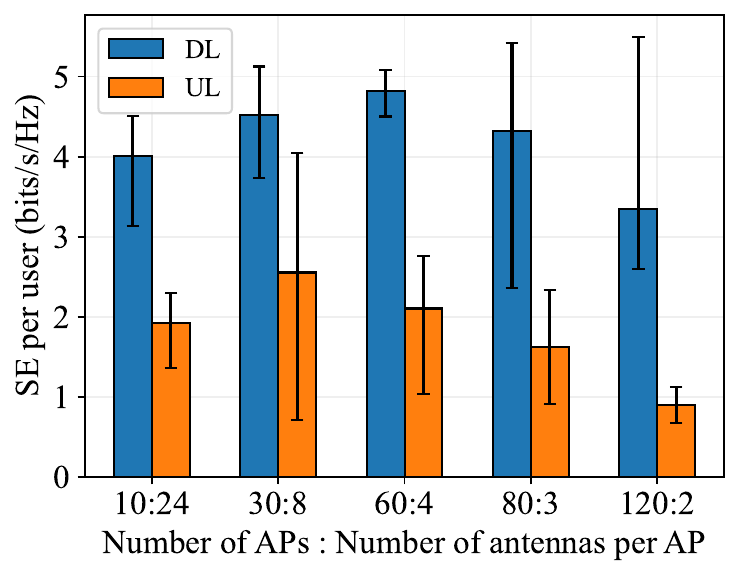}
    \caption{$K_u = 5, K_d = 5$}
    \label{fig:second_SE_per_UE}
\end{subfigure}
\begin{subfigure}{0.31\textwidth}
    \includegraphics[width=\textwidth]{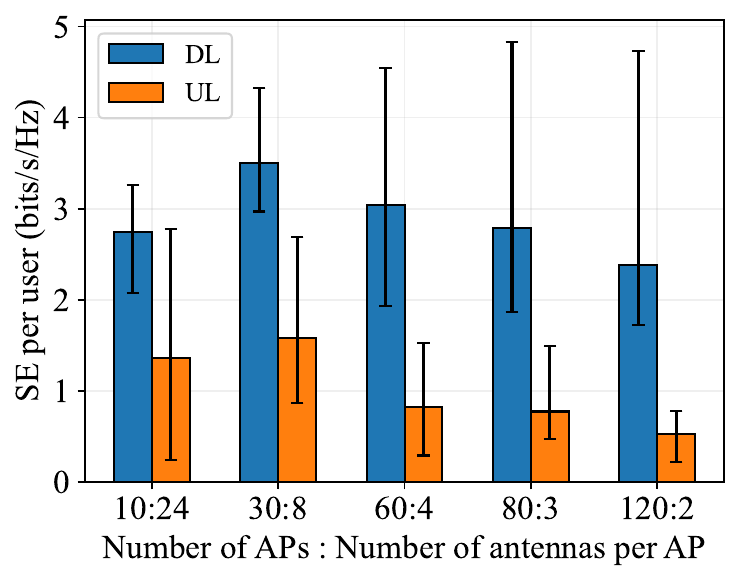}
    \caption{$K_u = 10, K_d = 10$}
    \label{fig:third_SE_per_UE}
\end{subfigure}    
\caption{Average SE per UE.}
\label{fig:SE_per_UE}
\vspace{-1em}
\end{figure*}

Table~\ref{tab:DE_operator} summarizes the performance of various DE operators used for reproduction. The $rand/1$ and $rand/2$ strategies exhibit strong exploration abilities. However, as the level of randomness increases in $rand/2$, the SE decreases, indicating that excessive randomness can impair the ability of the algorithm to exploit promising solution regions, thereby slowing convergence. In contrast, the $best/1$ operator, which relies entirely on the best individual, tends to converge quickly but often suffers from premature convergence, limiting the overall effectiveness. It is observed that introducing a random factor in $best/2$ enhances the SE by enabling solutions to escape the local region. To leverage different strategies, CHDE enhances diversity by selecting reproduction candidates from among the top individuals, rather than relying solely on the best one. In addition, a random component is introduced to enhance the exploration capability. Experimental results validate that CHDE consistently outperforms the compared operators by approximately $5.8\%$–$14.39\%$.

 Figure~\ref{fig:FD_model} compares the proposed model with the previous model in \cite{Mohammad:JSAC:2023}, where only HD-mode APs are employed. The proportion of APs operating in FD mode ranges from $16\%$ to $33\%$, depending on system configuration, and this ratio increases with higher AP density. The performance gain of the proposed model is affected by both the loop interference-to-noise ratio (LINR) and the number of APs. When LINR is higher than noise, system performance tends to decrease, resulting in lower SE. In contrast, increasing the number of FD APs generally improves the relative efficiency of the proposed model. 

Figure~\ref{fig:impact_of_number_of_APs} illustrates the impact of the number of APs on the total SE of the network system. It shows that for each scenario ($K_u = K_d \in [3, 5, 10]$), there is an optimal number of antennas per AP that maximizes the total SE. Initially, when the number of APs increases, the total SE also increases across all scenarios. However, after reaching a specific number of APs, the total SE no longer increases and may even decrease when this amount becomes too large since the total number of antennas across the APs remains constant ($MN = 240$). The presence of larger residual SI and CLI negatively impacts the total SE. Additionally, when individual SE lower constraints are imposed, the system needs to allocate more power to UEs with unfavorable links. Then, stronger interference exists, and the total SE is sacrificed to compensate for these UEs.

\begin{figure*}
\centering
\begin{subfigure}{0.31\textwidth}
    \includegraphics[width=\linewidth]{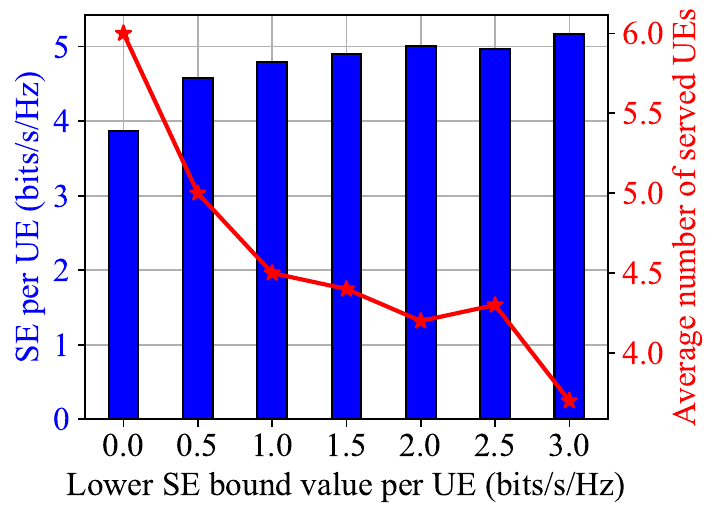}
    \caption{$K_u = 3, K_d = 3$}
    \label{fig:first_constraint}
\end{subfigure}
\begin{subfigure}{0.31\textwidth}
    \includegraphics[width=\textwidth]{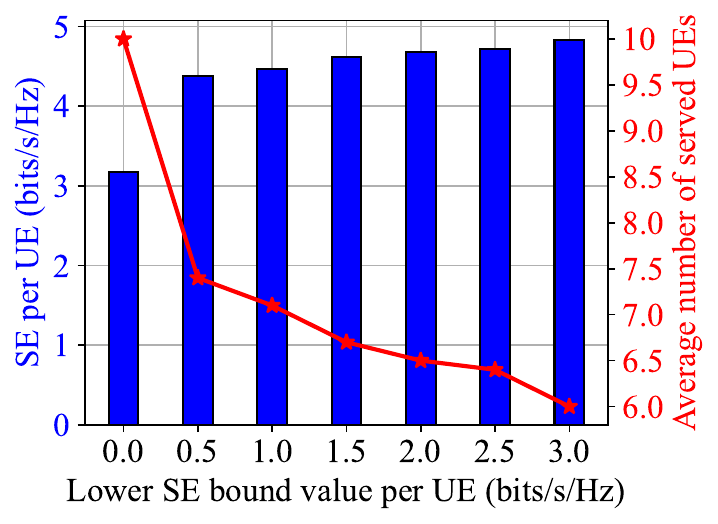}
    \caption{$K_u = 5, K_d = 5$}
    \label{fig:second_constraint}
\end{subfigure}
\begin{subfigure}{0.31\textwidth}
    \includegraphics[width=\textwidth]{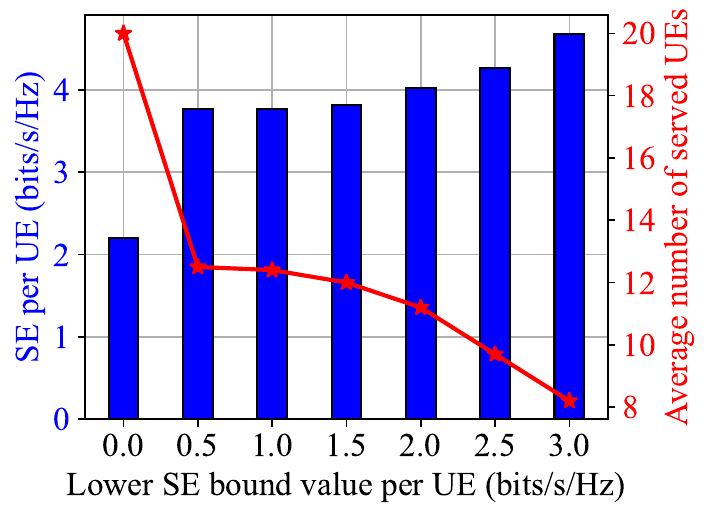}
    \caption{$K_u = 10, K_d = 10$}
    \label{fig:third_constraint}
\end{subfigure}
  \vspace{0em}      
\caption{Impact of SE lower requirement on the average of SE per UE and the number of served UEs.}
\label{fig:Constraint}
\vspace{-0.2em}
\end{figure*}

\begin{figure*}[t]
\vspace{-0.2cm}
    \centering
    \begin{minipage}[t]{0.32\textwidth}
        \centering
        \includegraphics[trim=0 0cm 0cm 0cm,clip,width=1\textwidth]{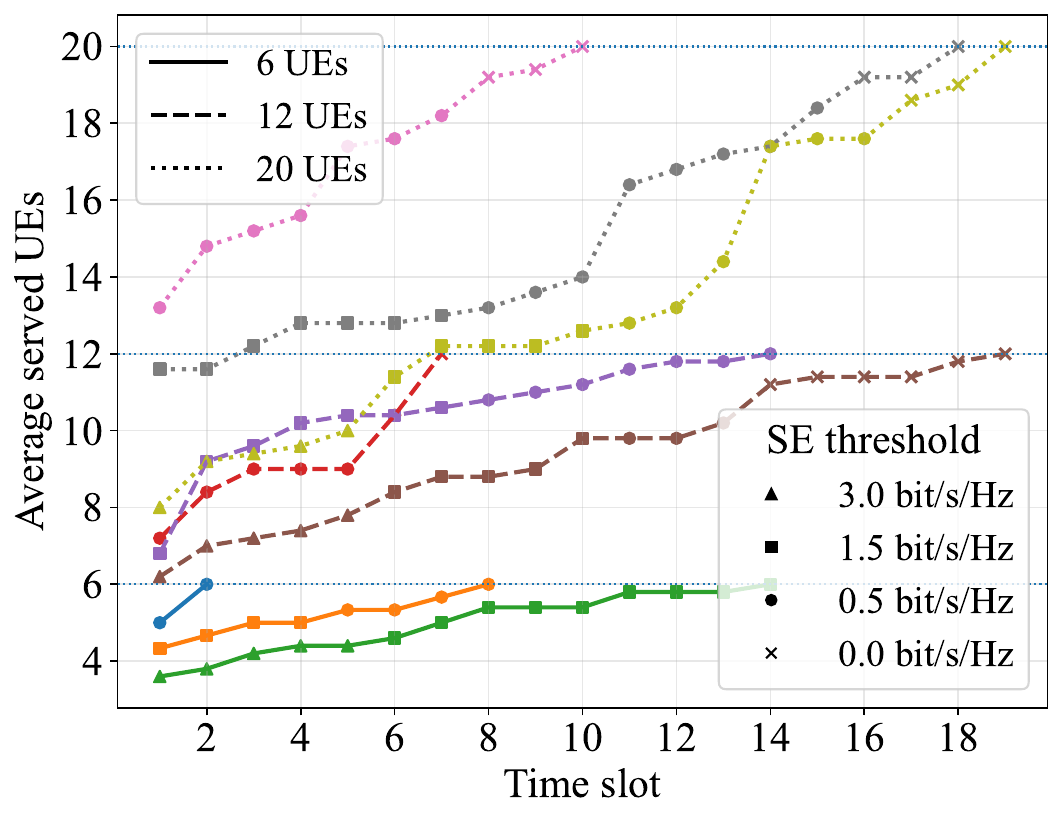}
        \vspace{-1.5em}
	    \caption{Average served UEs versus number of time slots for different network sizes and initial SE requirements.}
    \label{fig:time_slot}
    \end{minipage}
    \hfill
    \begin{minipage}[t]{0.32\textwidth}
        \centering
        \includegraphics[trim=0 0cm 0cm 0cm,clip,width=1.05\textwidth]{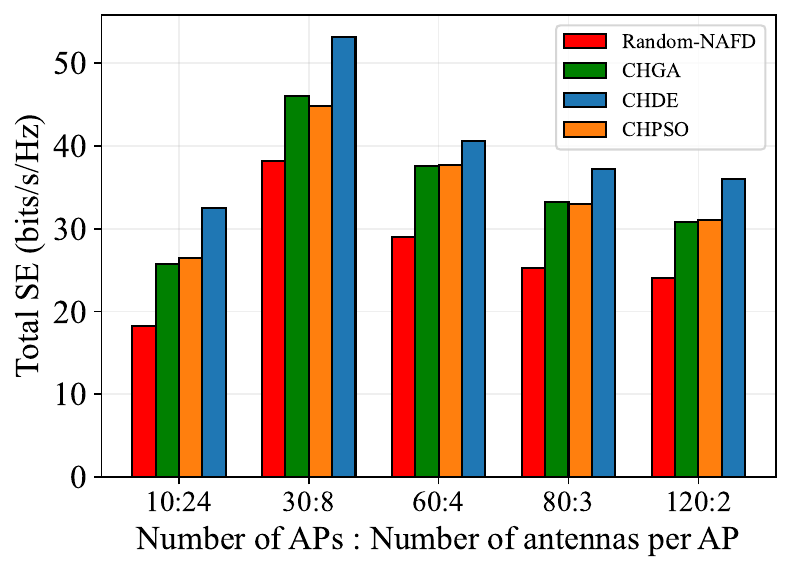}
        \vspace{-1.5em}
    \caption{Large-scale system with $100$ users.}
    \label{fig:large-scale}
    \end{minipage}
    \hfill
    \begin{minipage}[t]{0.32\textwidth}
        \centering
        \includegraphics[trim=0 0cm 0cm 0cm,clip,width=1\textwidth]{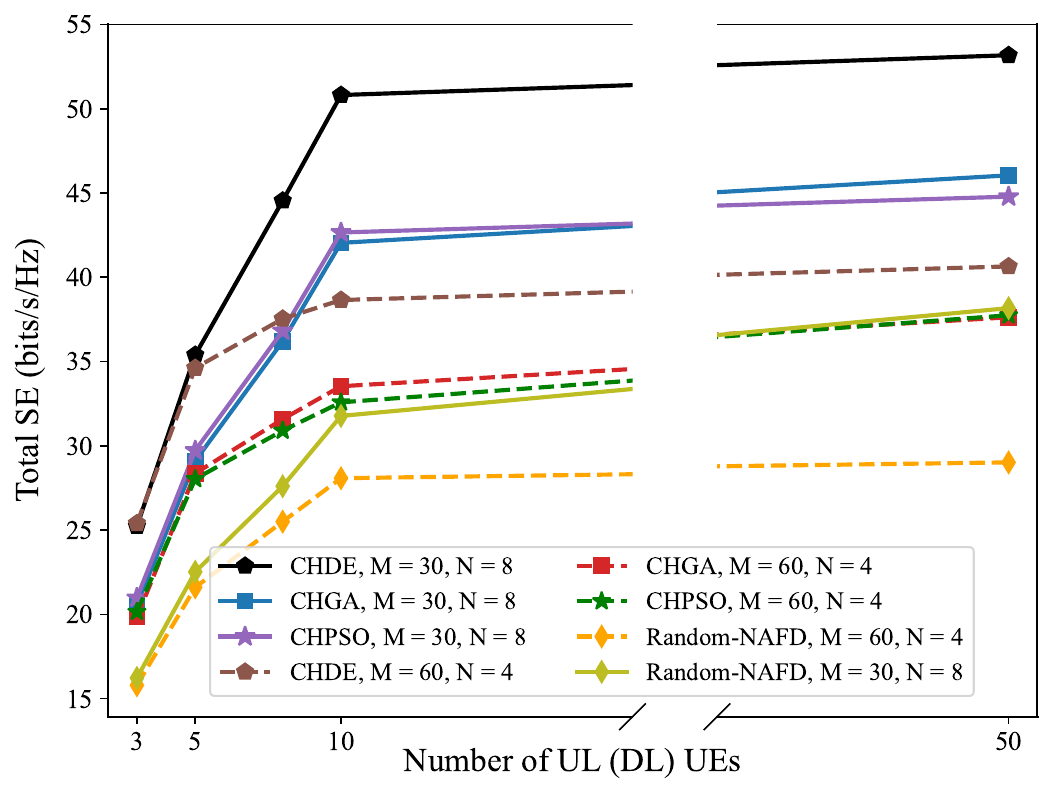}
        \vspace{-1.5em}
	\caption{Impact of the number of UEs.}
    \label{fig:impact_of_number_of_UEs}
    \end{minipage}
\vspace{-1.0em}
\end{figure*}

Figure~\ref{fig:SE_per_UE} demonstrates the SE per UE with lower threshold $\mathcal{S}_{QoS} = 0.2$~bits/s/Hz. With the same power provided to the system, when the number of UEs increases, SE per UE slightly decreases. Additionally, SE per UE is also influenced by the AP-antenna configuration implemented in the network, which aligns with the analysis of total SE as indicated in Fig.~\ref{fig:objective_otpimization}. With $M = 30$ and $N = 8$, the SEs per UE in three scenarios with 6, 10, and 20 UEs are $4.21$, $3.54$, and $2.54$~bits/s/Hz, respectively. There is a significant disparity between SE of DL UEs and SE of UL UEs. Moreover, as the number of UEs and APs increases, the gap between the maximum SE and minimum SE per UE becomes more pronounced. With 20 UEs ($K_d = K_u = 10$), the max-min distance ranges from $0.56~\text{to}~3$~bits/s/Hz. High UE density, combined with a small number of antennas per AP, presents a significant challenge for the network system in allocating power to reduce interference and ensure specific SE requirements for UEs.

CHDE suspends service to UEs whose SE requirements cannot be met, thereby improving network quality and enhancing resource efficiency. Figure~\ref{fig:Constraint} illustrates the impact of the minimum SE constraints on average SE and the number of UEs served with $\mathcal{S}_{QoS} \in [0, 3]$~bits/s/Hz. Most UEs are served when the constraints are relaxed. However, as the minimum SE threshold increases, it becomes more challenging to serve the UEs. This effect is even more evident in networks with high UE density. Figure~\ref{fig:time_slot} illustrates the average number of time slots required to serve all UEs under different initial SE thresholds. In this experiment, UEs that are not scheduled in a given slot are retained with new UEs for a subsequent scheduling slot. We evaluate three network sizes with initial SE thresholds of 0.5, 1.5, and 3.0 bit/s/Hz. To prevent long-term starvation, a re-entry mechanism with threshold relaxation is introduced. If a UE remains unserved for several consecutive time slots due to unfavorable channel conditions, its SE requirement is gradually relaxed to a lower level, ensuring eventual service. For instance, in the 6-UE scenario, all users are served after an average of 2 time slots with a 0.5 bit/s/Hz threshold, whereas approximately 14 time slots are required with a 3.0 bit/s/Hz threshold. In the latter case, the SE constraint of persistently unserved UEs is relaxed to 1.5 bit/s/Hz after several slots. Similar behavior is observed in the 12-UE and 20-UE cases. These results demonstrate that the proposed relaxation mechanism guarantees that no UE is permanently excluded.

Additionally, this paper evaluates the performance of the proposed algorithm in a large-scale system scenario, where the system serves $100$ UEs ($K_u = K_d = 50$) with a LINR of $20$ dB, as depicted Fig.~\ref{fig:large-scale}. The results demonstrate the superior effectiveness of the proposed algorithm compared to other algorithms. The total SE reaches its maximum when the system includes $30$ APs. However, the improvement of the total SE is not significantly higher than in the scenario with $20$ UEs, due to the same power limitation of the system. The impact of the number of UEs on the system performance is depicted in Fig.~\ref{fig:impact_of_number_of_UEs}. The total SE of the system is improved with a higher number of UEs. This figure also shows the gain of CHDE compared to other benchmark schemes with two AP configurations ($M = 30, 60$) in various UE density scenarios. The balance between the number of APs and the number of antennas per AP influences the rate of increase in total SE. However, the total SE increases rapidly as the number of UEs grows from $6$ to $20$, but the rate of improvement slows down when the system serves up to $100$ UEs. The total SE tends to saturate as the number of UEs exceeds a certain threshold, due to power limitations and increased interference.

\vspace{-0.3em}
\section{Conclusion}~\label{Sec:conc}
This paper analyzed and optimized the ergodic SE in NAFD CF-mMIMO systems, which can flexibly operate in FD or HD mode with linear processing. By jointly designing the AP mode of operation, power control, and LSFD weights, this research maximizes the total SE, while accounting for individual UL and DL QoS SE requirements as well as UL and DL transmit power constraints. We derived the analytical SE for each UE in this proposed model under arbitrary fading channel models and the impact of loop interference caused by FD APs. Leveraging the robustness of evolutionary computation, CHDE was proposed to maximize the total SE while satisfying specific constraints. The algorithm employs a repair mechanism to handle constraint violations, which may exclude some UEs due to practical limits. Numerical results verify the model’s superiority over prior systems and demonstrate CHDE’s efficiency against benchmarks.

\vspace{-0.3em}
\appendices

\section{Proof of Proposition~\ref{Prop:SE:DLPZF}}~\label{Proof:Prop:SE:DLPZF}
By applying the use-and-then-forget capacity bounding method from~\cite{ngo17TWC}, the associated SE of $\UEdk$ is expressed as 
\vspace{0em}
\begin{align}~\label{eq:Sdlk1:PZF}
\mathcal{S}_{\dl,k}  =  \frac{\tau_c-\tau_t}{\tau_c}
\log_2 \bigg(1 +  \frac{\vert \mathbb{DS}_{\dl,k}^\PZF \vert^2}
{ \Delta_{\dl}}\bigg),
\vspace{1em}
\end{align}
where $\Delta_{\dl}=\Ex\big\{\!\vert \mathbb{BU}_{\dl,k}^\PZF \vert^2 \!\big\}
\!\!+\!\!\!\!\sum\nolimits_{k'\in\mathcal{K}_d \setminus k}\Ex\big\{\!\vert \mathbb{DI}_{\dl,kk'}^\PZF \!\vert^2\big\}   
\!\!+\!\!\!\! \sum\nolimits_{\ell\in \mathcal{K}_{u}}\Ex\big\{\!\vert \mathbb{UI}_{\ell}^\PZF \vert^2\!\big\}
+\!1$.
Therefore, we need to compute $\mathbb{DS}_{\dl,k}$, $\Ex\big\{ \vert \mathbb{BU}_{\dl,k}\vert^2\big\}$, $\Ex\big\{\vert\mathbb{DI}_{\dl,kk'}\vert^2\big\}$, and $\Ex\big\{\vert \mathbb{UI}_{\ell} \vert^2\big\}$. To compute $\mathbb{DS}_{\dl,k}$, by replacing $\gmkd=(\hgmkd+\tgmkd)$, we get
\begin{align}~\label{eq:DSk:PZF}
  \mathbb{DS}_{\dl,k} 
&\!=\! \sqrt{\rho_d}\bigg(\!\sum\limits_{m \in \Zk}\!\!\! a_m\theta_{mk} \gamdmk
\!+\!\!
N\!\!\!\!\sum\limits_{m \in \Tk} \!\!\!\! a_m\theta_{mk} \gamdmk\!\bigg),
\end{align}
which is based on the observation that $\wmkdlzf$, $\wmkdlmr$, and $\tgmkd$ are zero-mean  RVs and are independent of each other.

Note that $\Ex\left\{ \vert \mathbb{BU}_{\dl,k}\vert^2\right\} =
{\rho_d} \Xi_k - \vert\mathbb{DS}_{\dl,k}\vert^2$, where
\vspace{-0.1em}
\begin{align}~\label{eq:Xik:PZF}
\Xi_k 
&=\Xi_k^\ZF + \Xi_k^\MR  + 2\Big(\sum\nolimits_{m \in \Zk} a_m\theta_{mk} \gamdmk\Big)
\nonumber\\
&\hspace{4em}\times
\Big(N\sum\nolimits_{m \in \Tk}  a_m\theta_{mk} \gamdmk\Big),
\end{align}
with $\Xi_k^\ZF\triangleq\Ex\big\{
\big\vert\sum_{m \in \Zk} a_m\theta_{mk}
\left(\gmkd\right)^\dag\wmkdlzf\big\vert^2\big\}$ and
$\Xi_k^\MR\triangleq\Ex\big\{ \big\vert\sum_{m \in \Tk} a_m\theta_{mk}
\left(\gmkd\right)^\dag\wmkdlmr\big\vert^2\big\}$. Additionally,
\vspace{-0.2em}
\begin{align}~\label{eq:XikZF}
    \Xi_k^\ZF&=\Ex\Big\{
\Big\vert\sum\nolimits_{m \in \Zk} a_m\theta_{mk}
\left(\hgmkd+\tgmkd\right)^\dag\wmkdlzf\Big\vert^2\Big\}
\nonumber\\
&=\Big(\sum\nolimits_{m \in \Zk} a_m\theta_{mk}\gamdmk\Big)^2 
\nonumber\\
&
+ \sum\nolimits_{m \in \Zk} a_m\theta_{mk}^2 (\betamkd-\gamdmk)\frac{\gamdmk}{N - \vert \Sm\vert},
\end{align}
where we have used~\eqref{eq:PZF_prec2} and
\begin{align}~\label{eq:gtildevzf}
&\sum\nolimits_{m \in \Zk} a_m\theta_{mk}^2
\Ex\Big\{ \Big\vert
\left(\tgmkd\right)^\dag\wmkdlzf\Big\vert^2\Big\}
\nonumber\\
&=\!\!\!
\sum_{m \in \Zk} \!\!\!a_m\theta_{mk}^2\!
\left(\betamkd \!-\! \gamdmk\right)
\!(\gamdmk)^2
\Ex\Big\{\!\Big[\!\Big( \!\big(\hGmdl\big)^\dag \hGmdl\!\Big)^{\!\!-1}\Big]_{kk}\Big\},
\nonumber\\
&=
\sum\nolimits_{m \in \mathcal{M}} a_m\theta_{mk}^2
\left(\betamkd - \gamdmk\right)
\frac{\gamdmk}{N-\vert\Sm\vert},
\end{align}
where, $\tgmkd$ and $\wmkdlzf$ are independent, and the result follows from~\cite[Lemma 2.10]{tulino04}. Moreover, $ \Xi_k^\MR$ is derived as
\begin{align}~\label{eq:XikMR}
 \Xi_k^\MR
&
=\text{var}\Big(\sum\nolimits_{m \in \Tk} a_m\theta_{mk}
\left(\hgmkd+\tgmkd\right)^\dag\wmkdlmr\Big)
\nonumber\\
&
+
\bigg\vert\Ex\bigg\{
\sum\nolimits_{m \in \Tk} a_m\theta_{mk}
\left(\hgmkd+\tgmkd\right)^\dag\wmkdlmr\bigg\}\bigg\vert^2
\nonumber\\
&=\sum\nolimits_{m \in \Tk}\!\!a_m\theta_{mk}
\Big( 
\Ex\Big\{\!\big\Vert\hgmkd\big\Vert^4\!\Big\}\!+\!\Ex\Big\{\big\vert(\tgmkd)^\dag\hgmkd\!\big\vert^2\!\Big\} 
\nonumber\\
&
-\! (N\gamdmk)^{\!2}\!\Big)
\!+\!
\bigg\vert\Ex\bigg\{\!\!
\sum_{m \in \Tk} \!\!\!a_m\theta_{mk}
\!\left(\hgmkd\!+\!\tgmkd\right)^{\!\dag}\wmkdlmr\!\bigg\}\!\bigg\vert^2
\nonumber\\
&=\!\!\!\sum\limits_{m \in \Tk}\!\!\!a_m\theta_{mk}^2
N\gamdmk\betamkd
\!+\!\Big(\!N\!\!\!\sum\limits_{m \in \Tk} \!\!\!a_m\theta_{mk}\gamdmk\Big)^{\!2}\!. 
\end{align}
Therefore, plugging~\eqref{eq:XikZF} and~\eqref{eq:XikMR} into~\eqref{eq:Xik:PZF}, we have
\begin{align}~\label{eq:EBUk:PZF:final}
\Ex\left\{ \vert \mathbb{BU}_{\dl,k}\vert^2\right\} &=
{\rho_d} \Big( \sum\nolimits_{m \in \Zk} a_m\theta_{mk}^2 (\betamkd-\gamdmk)\nonumber\\
&\hspace{-4em}
\times\frac{\gamdmk}{N - \vert \Sm\vert} 
+
\sum\nolimits_{m \in \Tk}a_m\theta_{mk}^2
N\gamdmk\betamkd \Big). 
\end{align}
Similarly, we can compute $\Ex\left\{\vert \mathbb{UI}_{\ell} \vert^2\right\} ={\rho_u {\varsigma}_\ell}\betakldu$  and 
\begin{align}~\label{eq:EDIkkp:PZF}
\Ex\left\{ \vert \mathbb{DI}_{\dl,kk'} \vert^2\right\} &\!=\!\!
{\rho_d} \Big( \!\sum\nolimits_{m \in \Zk} \!\!a_m\theta_{mk'}^2 (\betamkd\!-\!\gamdmk)\frac{\gamdmkp}{N \!-\! \vert \Sm\vert}
\nonumber\\
&
+
N\sum\nolimits_{m \in \Tk}a_m\theta_{mk'}^2
\gamdmk\betamkpd \Big).
\end{align}
Using~\eqref{eq:DSk:PZF}, \eqref{eq:EBUk:PZF:final}, and~\eqref{eq:EDIkkp:PZF} into~\eqref{eq:Sdlk1:PZF}, the desired result is in~\eqref{eq:DL:SE}.

\section{Proof of Theorem~\ref{theo:first_fitting_time}}
\label{proof:first_hitting_time}
From the average gain, an upper bound on $\Ex[T_0]$ is obtained
\begin{Lemma}\label{lemma:first_hitting_time_proof}
    \cite{yushan2016first} Let $\{\zeta
    ^{(G)}\}_{G = 1}^{\infty}$ be a stochastic non-negative process. Let $T_0^{\zeta} = \min\{G | \zeta^{(G)} = 0\}$ and $\sigma-$algebra $\mathcal{H}^{(G)} = \sigma(\zeta^{(1)}, \zeta^{(2)}, \ldots, \zeta^{(G)})$. Assume that $\mathbb{E}\{T^{\zeta}_{0}\} < +\infty$. If there exists $\alpha \in \mathbb{R}$ and $\mathbb{E}\{\zeta^{(G)} - \zeta^{(G+1)} |\mathcal{H}^{(G)}\} \geq \alpha > 0, \forall G \geq 1$, then $\mathbb{E}\{T_0^{\zeta} |\zeta_1\} \leq \frac{\zeta_1}{\alpha}$.
\end{Lemma}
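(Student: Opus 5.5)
The plan is the standard additive-drift argument: build a supermartingale out of the process and the elapsed time, then read off the bound on $T_0^{\zeta}$ by stopping it. Throughout, we work conditionally on $\mathcal{H}^{(1)}$, so that $\zeta^{(1)}=\zeta_1$ is treated as a constant. Write $T=T_0^{\zeta}$ and consider the stopped process
\begin{equation*}
M^{(G)} \triangleq \zeta^{(G\wedge T)} + \alpha\big((G\wedge T)-1\big), \qquad G\geq 1,
\end{equation*}
which is adapted to $\{\mathcal{H}^{(G)}\}$ and satisfies $M^{(1)}=\zeta_1$.

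\emph{Step 1: $M^{(G)}$ is a supermartingale.} We split according to the event $\{T\le G\}$, which belongs to $\mathcal{H}^{(G)}$.
\begin{itemize}
\item On $\{T\le G\}$ the process is frozen, so $M^{(G+1)}-M^{(G)}=0$.
\item On $\{T>G\}$ we have $G\wedge T=G$ and $(G+1)\wedge T=G+1$. Hence
\begin{equation*}
\mathbb{E}\{M^{(G+1)}-M^{(G)}\mid\mathcal{H}^{(G)}\} = \alpha-\mathbb{E}\{\zeta^{(G)}-\zeta^{(G+1)}\mid\mathcal{H}^{(G)}\}\le 0
\end{equation*}
by the drift hypothesis.
\end{itemize}
Therefore $\mathbb{E}\{M^{(G+1)}\mid\mathcal{H}^{(G)}\}\le M^{(G)}$, and iterating with the tower property gives $\mathbb{E}\{M^{(G)}\mid\zeta_1\}\le \zeta_1$ for every $G$.

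\emph{Step 2: pass to the limit without a full optional stopping theorem.} Since $\zeta^{(G\wedge T)}\ge 0$, Step 1 yields
\begin{equation*}
\alpha\,\mathbb{E}\{(G\wedge T)-1\mid\zeta_1\}\le\zeta_1 \quad \text{for all } G.
\end{equation*}
The variables $(G\wedge T)-1$ are nonnegative and increase to $T-1$ as $G\to\infty$. Monotone convergence therefore gives $\mathbb{E}\{T-1\mid\zeta_1\}\le \zeta_1/\alpha$. This route avoids needing bounded increments. The assumption $\mathbb{E}\{T_0^{\zeta}\}<\infty$ is then only needed to make sure the conditional expectations are well defined, or alternatively to invoke Doob's optional stopping directly on $M^{(T)}=\alpha(T-1)$.

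\emph{Step 3: reconcile the counting convention.} The bound obtained is on the number of transitions $T-1$ needed to reach zero, starting from generation $1$. Equivalently, if the first generation is indexed $0$, this is exactly $\mathbb{E}\{T_0^{\zeta}\mid\zeta_1\}\le\zeta_1/\alpha$ as stated. This offset is precisely the source of the additive $1$ in Theorem~\ref{theo:first_fitting_time}.

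I expect the main obstacle to be this bookkeeping rather than the probabilistic core. One part is fixing the index offset so that the constant matches the lemma as stated. The other is justifying the limit interchange, which the monotone-convergence argument in Step 2 handles cleanly because the process $\zeta^{(G)}$ is nonnegative.
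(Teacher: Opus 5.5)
The paper gives no proof of this lemma: it is quoted from \cite{yushan2016first} and used as a black box in the proof of Theorem~\ref{theo:first_fitting_time}. Your stopped-supermartingale argument is correct, and it is the more economical of the two natural routes. The hypothesis $\mathbb{E}\{T_0^{\zeta}\}<+\infty$ in the statement is the kind of assumption needed to apply optional stopping directly at $T_0^{\zeta}$, together with control on the increments. Your truncation at $G\wedge T$ followed by monotone convergence uses only $\zeta^{(G)}\ge 0$. So that hypothesis can simply be dropped, and $T_0^{\zeta}<\infty$ almost surely comes out as a by-product. You also correctly invoke the drift hypothesis only on $\{T>G\}$. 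Read literally for all $G\ge 1$, it cannot hold on $\{\zeta^{(G)}=0\}$, where the conditional drift equals $-\mathbb{E}\{\zeta^{(G+1)}\mid\mathcal{H}^{(G)}\}\le 0$.

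Two points in your Step~3 need correcting.

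\emph{The index shift is not a harmless convention.} Take $G$ starting at $1$ and $T_0^{\zeta}=\min\{G\mid\zeta^{(G)}=0\}$ as written, with the drift condition read as intended. Then the stated bound is false: for $\zeta^{(1)}=\alpha$, $\zeta^{(2)}=0$ you get $T_0^{\zeta}=2>\zeta_1/\alpha=1$. What you prove, $\mathbb{E}\{T_0^{\zeta}-1\mid\zeta_1\}\le\zeta_1/\alpha$, is therefore a corrected lemma, not an equivalent form of the stated one, and you should present it that way.

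\emph{The additive $1$ in Theorem~\ref{theo:first_fitting_time} does not come from this offset in the paper's argument.} Appendix~\ref{proof:first_hitting_time} builds the $1$ into $\chi(x)=1+\int_0^x dt/h(t)$ and then applies the lemma in its literal one-indexed form with $\alpha=1$. Combining that $\chi$ with your corrected lemma gives only $2+\int_0^{\Delta^{(1)}_{\varepsilon}}dt/h(t)$. To recover the theorem, drop the $1$ and take $\chi(x)=\int_0^x dt/h(t)$. This still works because monotonicity of $h$ gives $\int_0^x dt/h(t)\ge x/h(x)$. Hence $\chi(\Delta^{(G)}_{\varepsilon})-\chi(\Delta^{(G+1)}_{\varepsilon})\ge(\Delta^{(G)}_{\varepsilon}-\Delta^{(G+1)}_{\varepsilon})/h(\Delta^{(G)}_{\varepsilon})$ holds pointwise, including when $\Delta^{(G+1)}_{\varepsilon}=0$. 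Only with this choice of $\chi$ is the $1$ in the theorem exactly your index offset.
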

Let $\chi(x) = 0$ if $x=0$. Otherwise, $\chi(x) = 1 + \int_{0}^{x}\frac{1}{h(t)}dt$ if $x>0$. We consider the two cases: \textit{(i)} If $\Delta^{(G)}_{\varepsilon} > 0$ and $\Delta^{(G+1)}_{\varepsilon} = 0$,
    $\mathbb{E}\{\chi(\Delta^{(G)}_{\varepsilon}) - \chi(\Delta^{(G+1)}_{\varepsilon}) |\mathcal{F}^{(G)} \} =\mathbb{E} \{ 1 + \int_{0}^{\Delta^{(G)}_{\varepsilon}} \frac{1}{h(t)}dt|\mathcal{F}^{(G)} \} \geq 1$,
where $\mathcal{F}^{(G)} = \sigma(\Delta^{(1)}_{\varepsilon}, \Delta^{(2)}_{\varepsilon}, \ldots,\Delta^{(G)}_{\varepsilon})$; and \textit{(ii)} If $\Delta^{(G)}_{\varepsilon} > 0$ and $\Delta^{(G+1)}_{\varepsilon} > 0$, it holds that
$\mathbb{E} \{\chi(\Delta^{(G)}_{\varepsilon}) \!-\! \chi(\Delta^{(G+1)}_{\varepsilon}) |\mathcal{F}^{(G)} \}
    =\mathbb{E} \{ \int_{\Delta^{(G+1)}_{\varepsilon}}^{\Delta^{(G)}_{\varepsilon}}\frac{1}{h(t)}dt \big|\mathcal{F}^{(G)} \} 
    \stackrel{(a)}{\geq} \mathbb{E} \{\frac{\Delta^{(G)}_{\varepsilon} - \Delta^{(G+1)}_{\varepsilon}}{h(\Delta^{(G)}_{\varepsilon})} \big| \mathcal{F}^{(G)} \} 
    = \frac{1}{h(\Delta^{(G)}_{\varepsilon})} \mathbb{E} \{\Delta^{(G)}_{\varepsilon} \!-\! \Delta^{(G+1)}_{\varepsilon} \big| \mathcal{F}^{(G)} \} \geq 1,$
where $(a)$ is obtained by $h(\Delta^{(G)}_{\varepsilon}) > h(t), \forall t \in [\Delta^{(G+ 1)}_{\varepsilon}, \Delta^{(G)}_{\varepsilon}]$.

From \textit{(i)} and \textit{(ii)}, $\mathbb{E}\left\{\chi(\Delta^{(G)}_{\varepsilon}) - \chi(\Delta^{(G+1)}_{\varepsilon}) |\mathcal{F}^{(G)}\right\} \geq 1, \forall \Delta^{(G)}_{\varepsilon} > 0$. Considering a stochastic process $\left\{\chi(\Delta^{(G)}_{\varepsilon})\right\}_{G = 1}^{\infty}$, denote $T_{0}^{\chi} = \min \left\{G \big| \chi(\Delta^{(G)}_{\varepsilon}) = 0\right\}$ as the first hitting time. Using Lemma \ref{lemma:first_hitting_time_proof} and $\mathbb{E}\left\{\DElta^{(G)}_{\varepsilon} - \Delta^{(G + 1)}_{\varepsilon}\big| \mathcal{F}^{(G)} \right\} = \mathbb{E}\left\{\Delta^{(G)}_{\varepsilon} - \Delta^{(G + 1)}_{\varepsilon}\big| \Delta^{(G)}_{\varepsilon} \right\}$, Theorem \ref{theo:first_fitting_time} is proven
\begin{align}
    \mathbb{E}\left\{T_\varepsilon \big| \Delta^{(1)}_{\varepsilon} \right\} 
    &= \mathbb{E}\left\{T_{\varepsilon}^{\chi} \big| \chi(\Delta^{(1)}_{\varepsilon})\right\} \nonumber
    &
    \leq 1 + \int_{0}^{\Delta^{(1)}_{\varepsilon}}\frac{1}{h(t)}dt.
\end{align}
\bibliographystyle{IEEEtran}
\bibliography{IEEEabrv,references}

\end{document}